\newcommand{\be}{\begin{equation}}
\newcommand{\ee}{\end{equation}}
\newcommand{\ba}{\begin{array}}
\newcommand{\ea}{\end{array}}
\newcommand{\bea}{\begin{eqnarray}}
\newcommand{\eea}{\end{eqnarray}}
\newcommand{\cA}{{\cal A }}
\newcommand{\calI}{{\cal I }}
\newcommand{\calL}{{\cal L }}
\newcommand*{\cI}{\mathcal{I}}
\newcommand*{\cL}{\mathcal{L}}
\newcommand{\cS}{\mathcal{S}}
\newcommand{\ZZ}{\mathbb{Z}}
\newcommand{\la}{\langle}
\newcommand{\ra}{\rangle}
\newcommand{\HZn}{H_{\mathbb{Z}_n}}
\newcommand{\ghz}[1]{\mathsf{GHZ_{#1}}}
\newcommand{\cx}{\mathsf{CX}}
\newcommand{\cz}{\mathsf{CZ}}
\newcommand{\rz}{\mathsf{RZ}}
\newtheorem{theorem}{Theorem}
\newtheorem{lemma}{Lemma}
\newtheorem{corollary}[theorem]{Corollary}
\numberwithin{lemma}{section}
\newtheorem{corol}{Corollary}
\newcommand{\ket}[1]{|#1\rangle}
\newcommand{\bra}[1]{\langle#1|}
\newcommand{\proj}[1]{|#1\rangle\langle#1|}
\newtheorem{fact}{Fact}
\newcommand{\qaoap}[2]{{\mathsf{QAOA}_{#2}(#1)}} %expectation value of optimal QAOA state
\newcommand*{\RQAOA}{\mathsf{RQAOA}}
\newcommand*{\QAOA}{\mathsf{QAOA}}
\DeclareMathOperator*{\argmax}{arg\,max}
\begin{document}
\title{Obstacles to State Preparation and Variational Optimization\\
from Symmetry Protection}
\author{Sergey Bravyi}
\affiliation{IBM T.J. Watson Research Center, Yorktown Heights, NY 10598, USA}
\author{Alexander Kliesch}
\affiliation{Zentrum Mathematik, Technical University of Munich, 85748 Garching, Germany}
\author{Robert Koenig}
\affiliation{Institute for Advanced Study \& Zentrum Mathematik, Technical University of Munich, 85748 Garching, Germany}
\author{Eugene Tang}
\affiliation{Institute for Quantum Information and Matter, Caltech,
Pasadena, CA 91125}

\begin{abstract}
Local Hamiltonians with topological quantum order
exhibit highly entangled ground states that
cannot be prepared by shallow quantum circuits.
Here, we show that this property may extend to all low-energy states in the presence of an on-site $\ZZ_2$ symmetry. This proves a version of the No Low-Energy Trivial States (NLTS) conjecture for a family of local Hamiltonians with symmetry protected topological order.
A surprising consequence of this result is that the Goemans-Williamson algorithm outperforms the Quantum Approximate Optimization Algorithm (QAOA) for certain instances of MaxCut, at any constant level. We argue that the locality and symmetry of QAOA severely limits its performance. To overcome these limitations, we propose a non-local version of QAOA, and give numerical evidence that it significantly outperforms standard QAOA for frustrated Ising models on random $3$-regular graphs.

\end{abstract}
\maketitle

Classifying topological phases of matter is amongst the main objectives of modern condensed matter physics~\cite{chiuetal16}.  Central to this program is the  characterization of entanglement structures 
that can 
emerge  in ground states of many-body systems. Of particular interest are
topologically non-trivial ground states~\cite{xiechengwen}.
Such states cannot be generated by a constant-depth
quantum circuit starting from a product state.
Non-trivial states exhibit complex, non-local entanglement properties and are thus expected to have highly non-classical features.  Remarkably, certain gapped local Hamiltonians
have non-trivial ground states. 
For example, preparing a ground state of Kitaev's toric code~\cite{kitaev2003fault} from a product state requires a  circuit depth growing at least polynomially in the system size using using nearest-neighbor gates~\cite{BravyiHastingsVerstraete06}, and logarithmically using non-local gates~\cite{aharonovtouati18}.

Going beyond ground states, a natural next question 
is whether there are local Hamiltonians with the property that  any low-energy state is non-trivial. Formalized by Freedman and Hastings~\cite{freedmanhastings13}, this is known as the No Low-energy Trivial States (NLTS) conjecture. To state it in detail, consider many-body systems composed of $n$~finite-dimensional subsystems -- assumed here to be  qubits for simplicity. A local Hamiltonian is  a sum of interaction terms acting non-trivially on $O(1)$ qubits each. We require that each term has operator norm $O(1)$ and each qubit is involved in $O(1)$ terms. The interaction terms may be long-range (no geometric locality is needed). It will be assumed that a local Hamiltonian $H_n$ as above is defined for each $n\in \calI$, where $\calI$ is some infinite set of system sizes. A family of local Hamiltonians $\{H_n\}_{n\in\calI}$ is said to have the NLTS property  
if there exists a constant $\epsilon>0$ and a function
$f\, : \, \ZZ_+\to \ZZ_+$ 
such that:\\
(1) $H_n$ has ground state energy $0$ for any $n\in \calI$,\\
(2) $\la 0^n|U^\dag H_n U|0^n\ra> \epsilon n$
for any depth-$d$ circuit $U$ composed of two-qubit gates
and for any $n\ge f(d)$, $n\in \calI$.\\
Here the circuit depth $d$ can be arbitrary. The conjecture is that the NLTS property holds for some family of local Hamiltonians.

The validity of the NLTS conjecture is a necessary condition for the stronger quantum PCP conjecture to hold~\cite{hastingsqpcp}: the latter posits that there are local Hamiltonians whose ground state energy is QMA-hard to approximate with an extensive error $\epsilon n$ for some constant $\epsilon>0$.

A proof of the NLTS conjecture is still outstanding. Although 
many natural families of Hamiltonians provably do not have the NLTS property (see~\cite{eldarharrow} for a comprehensive list), evidence for its validity has been provided by a number of related results. Ref.~\cite{freedmanhastings13}  constructs  Hamiltonians satisfying a certain one-sided NLTS property: these have excitations of two kinds (similar to the toric code), and low-energy states with no excitations of the first kind are non-trivial. The construction crucially relies on expander graphs as even the one-sided NLTS property does not hold for similar constructions on regular lattices~\cite{freedmanhastings13}. 

Eldar and Harrow~\cite{eldarharrow} construct families of local Hamiltonians (also based on expander graphs) such that any state whose reduced density operators on a constant fraction of sites coincides with that of a ground state is  non-trivial. This feature, called the No Low-Error Trivial States (NLETS) property, is clearly related to robustness of entanglement in the ground state with respect to erasure errors~\cite{aharonovaradvidick2013,nirkhevaziraniyuen2018}. The existence of Hamiltonians with the NLETS property is a necessary condition~\cite{eldarharrow} for the existence of good quantum LDPC codes, another central conjecture in quantum information.

Here we pursue a different approach to the NLTS conjecture by imposing an additional symmetry in the initial state as well as the preparation circuit. This mirrors similar considerations in the classification of topological phases, where the concept of symmetry-protected topological (SPT) phases~\cite{GuWenSPT09} has been extremely fruitful.  Indeed, the study of SPT equivalence classes of states, pioneered in~\cite{haldane83,aklt87}, has led to a complete classification of 1D~phases~\cite{GuWenSPT09,schuch11,pollmannetal10}, and also plays an essential role in measurement-based quantum computation~\cite{weiafflekraussendorf12,millermiyake15}.

For concreteness, we focus on the simplest case of onsite $\mathbb{Z}_2$-symmetry. A local Hamiltonian is said to be $\mathbb{Z}_2$-symmetric if all interaction terms commute with  $X^{\otimes n}$, where $X$ is the single-qubit Pauli-$X$ operator. Likewise, a quantum circuit $U$ acting on $n$ qubits is said to be $\mathbb{Z}_2$-symmetric if it obeys
\begin{align}
\label{eq:Ztwosym}
UX^{\otimes n} = X^{\otimes n} U\ .
\end{align}
We do not impose the symmetry on the individual gates of $U$, though this will naturally be the case in many interesting examples, such as the QAOA circuits considered below.
Finally, let us say that a state $\Psi$ of $n$~qubits is $\mathbb{Z}_2$-symmetric 
if $X^{\otimes n}\Psi = \pm \Psi$.
Our first result is a proof of the NLTS conjecture in the presence of
onsite  $\mathbb{Z}_2$-symmetry:
\begin{theorem}\label{thm:nlts}
There exist constants $\epsilon,c>0$ and  a family of $\mathbb{Z}_2$-symmetric local 
Hamiltonians $\{H_n\}_{n\in \calI}$
such that $H_n$ has ground state energy~$0$ for any $n\in \calI$ while
\begin{align}
\bra{\varphi}U^{\dag}H_nU\ket{\varphi}> \epsilon n  
 \label{eq:upperboundHamiltonianenergy}
\end{align}
for any $\mathbb{Z}_2$-symmetric depth-$d$ circuit $U$
composed of two-qubit gates,
any $\mathbb{Z}_2$-symmetric product state $\varphi$,
and any $n\ge 2^{cd}$, $n\in \calI$.
\end{theorem}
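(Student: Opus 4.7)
The plan is to take $H_n$ to be the ferromagnetic Ising model on a bounded-degree expander graph, and to separate low-energy $\mathbb{Z}_2$-symmetric states from $\mathbb{Z}_2$-symmetric shallow-circuit outputs using the total magnetisation $M=\sum_{i=1}^n Z_i$, which is a $\mathbb{Z}_2$-odd order parameter. Concretely, let $\{G_n\}_{n\in\calI}$ be a family of graphs of uniformly bounded degree with a uniform edge expansion constant $\phi>0$ (e.g.\ random regular or Ramanujan expanders), and set
\[
 H_n \;=\; \sum_{(i,j)\in E(G_n)} \tfrac12\!\left(I - Z_iZ_j\right).
\]
This Hamiltonian is $\mathbb{Z}_2$-symmetric, local and operator-norm bounded, with $\ket{0^n}$ as a zero-energy ground state. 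The theorem then follows from a matching pair of bounds on $\langle M^2\rangle$: a lower bound $\langle \psi|M^2|\psi\rangle \geq \Omega(n^2)$ for every state with $\langle \psi|H_n|\psi\rangle \leq \epsilon n$, and an upper bound $\langle U\varphi|M^2|U\varphi\rangle = O(n\cdot 2^{2d})$ for every $\mathbb{Z}_2$-symmetric depth-$d$ circuit $U$ and every $\mathbb{Z}_2$-symmetric product state $\varphi$.

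For the lower bound I work in the computational basis, where both $H_n$ and $M$ are diagonal. Writing $p_z=|\langle z|\psi\rangle|^2$, one has $\langle H_n\rangle_\psi = \sum_z p_z\,\mathrm{cut}_{G_n}(z)$ and $\langle M^2\rangle_\psi = \sum_z p_z(n-2|z|)^2$. Edge expansion gives $\mathrm{cut}_{G_n}(z) \geq \phi\min(|z|, n-|z|)$, so $(\phi n/4)\Pr_p[\,|z|\in [n/4,3n/4]\,] \leq \epsilon n$, i.e.\ $\Pr_p[\,|z|\in[n/4,3n/4]\,] \leq 4\epsilon/\phi$. Since $(n-2|z|)^2 \geq n^2/4$ outside this range, this yields $\langle M^2\rangle_\psi \geq (n^2/4)(1-4\epsilon/\phi) \geq n^2/8$ whenever $\epsilon\leq\phi/8$.

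For the upper bound I expand $M^2 = nI + 2\sum_{i<j}Z_iZ_j$ and bound each two-point function separately. The operator $U^\dag Z_i U$ is supported on the backward lightcone $LC(i)\subseteq[n]$ of size at most $2^d$. Because $Z_i$ anticommutes with $X^{\otimes n}$ and $U$ commutes with it, $U^\dag Z_i U$ anticommutes with $X^{\otimes n}$, and therefore (acting trivially outside $LC(i)$) with $X^{LC(i)}$. A $\mathbb{Z}_2$-symmetric product state $\varphi$ is necessarily a tensor product of single-qubit $\ket{\pm}$ states, so $\varphi_{LC(i)}$ is an eigenvector of $X^{LC(i)}$, forcing $\langle \varphi|U^\dag Z_i U|\varphi\rangle = 0$. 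When $LC(i)\cap LC(j) = \emptyset$ the product structure of $\varphi$ factorises $\langle\varphi|U^\dag Z_iZ_jU|\varphi\rangle$ into two such vanishing one-point functions. The remaining overlapping pairs number at most $n\cdot 2^{2d}$, each contributing at most $1$ in absolute value, so $\langle M^2\rangle_{U\varphi} \leq n + 2n\cdot 2^{2d} \leq 3n\cdot 2^{2d}$.

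Combining the two bounds yields a contradiction whenever $3n\cdot 2^{2d} < n^2/8$, so setting $\epsilon=\phi/8$ and choosing any sufficiently large $c$ works. The main technical ingredient is the existence of a bounded-degree expander family indexed by an infinite set $\calI$ with uniform Cheeger constant, which is standard (explicit Ramanujan constructions or random regular graphs). Conceptually, the argument uses both hypotheses of the theorem in an essential way: the $\mathbb{Z}_2$-symmetry of $U$ is what makes $U^\dag Z_i U$ anticommute with $X^{\otimes n}$ and hence vanish in every $X^{\otimes n}$-eigenvector, while the product structure of $\varphi$ is what lets us factorise two-point functions across disjoint lightcones and separately apply this vanishing to each factor. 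Relaxing either hypothesis would permit genuine GHZ-type cat states at low energy and break the argument.
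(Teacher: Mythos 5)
Your proposal is correct, and it takes a genuinely different route from the paper. The paper invokes, as a black box, Corollary~43 of Eldar and Harrow: the output distribution $p(x)=|\langle x|U|\varphi\rangle|^2$ of a depth-$d$ circuit on a product state satisfies $\mathrm{dist}(S,S')\le 4n^{1/2}2^{3d/2}/\min\{p(S),p(S')\}$ for any two sets of strings. Taking $S,S'$ to be the strings of Hamming weight $\le n/3$ and $\ge 2n/3$, using $p(S)=p(S')$ (which is the only place the symmetry enters there), and using the expander bound plus Markov to force $p(S)\ge 1/4$ for a putative low-energy state, the paper derives $n^{1/2}\le 48\cdot 2^{3d/2}$. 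You instead prove the separation directly via the second moment of the magnetization $M=\sum_i Z_i$: the expander bound forces $\langle M^2\rangle\ge n^2/8$ at low energy, while the lightcone bound $|LC(i)|\le 2^d$ together with the vanishing of all one-point functions $\langle\varphi|U^\dag Z_iU|\varphi\rangle$ (which holds because $U^\dag Z_iU$ anticommutes with $X^{\otimes n}$, hence with $X^{LC(i)}$, and a $\ZZ_2$-symmetric product state is a tensor product of $X$-eigenstates) forces $\langle M^2\rangle\le 3n\cdot 2^{2d}$. Both steps check out, including the factorization of two-point functions across disjoint lightcones and the count of at most $n\cdot 2^{2d}$ overlapping pairs. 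Your argument is self-contained (no external isoperimetric inequality) and gives a slightly better threshold, $n=O(4^d)$ versus the paper's $n>48^2\cdot 8^d$. The trade-off is that it leans harder on the hypotheses: you need the product structure of $\varphi$ and the circuit symmetry at the level of operators to kill the one-point functions, whereas the paper's route only uses the bit-flip symmetry of the output \emph{distribution} and would apply verbatim to any state (not necessarily of the form $U\varphi$ with symmetric $U$ and $\varphi$) whose measurement statistics are invariant under global negation.
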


Our starting point to establish Theorem~\ref{thm:nlts} is  a fascinating
result by Eldar and Harrrow stated as Corollary~43 in~\cite{eldarharrow}.
It shows that the output distribution of a shallow quantum circuit
cannot assign a non-negligible probability to subsets of bit strings
that are separated far apart w.r.t.~the Hamming distance.
More precisely, define the distribution $p(x)=|\la x|U|\varphi\ra|^2$,
where $x\in \{0,1\}^n$. Given a subset $S\subseteq \{0,1\}^n$,
let  $p(S)=\sum_{x\in S} p(x)$. 
\begin{fact}[\cite{eldarharrow}]
For all subsets $S,S'\subseteq \{0,1\}^n$  one has
\be
\label{Corol43}
\mathrm{dist}(S,S')\le \frac{4 n^{1/2} 2^{3 d/2}}{\min{\{ p(S),p(S') \} }}.
\ee
\end{fact}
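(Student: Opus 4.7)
My plan is to combine the light-cone structure of depth-$d$ circuits with a second-moment estimate (Cauchy--Schwarz applied to the Hamming distance to~$S$) applied to the output distribution $p(x) = |\langle x|U|\varphi\rangle|^2$.

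The starting point is the standard light-cone fact. Each output qubit $i$ of the depth-$d$ two-qubit-gate circuit $U$ has a reverse light cone $\mathrm{RLC}(i)$ of at most $2^d$ input qubits; since $|\varphi\rangle$ is a product state, two output bits $X_i$ and $X_j$ of $U|\varphi\rangle$ are independent under $p$ whenever $\mathrm{RLC}(i) \cap \mathrm{RLC}(j) = \emptyset$. Combined with a bound of $2^d$ on the forward light cone of each input qubit, each output qubit is correlated with at most $2^{2d}$ others under $p$.

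Next I control a second moment of Hamming distance. For any $y \in \{0,1\}^n$, $f_y(x) := d_H(x,y) = \sum_i \mathbb{1}[x_i \ne y_i]$ is a sum of indicators whose pairwise covariances vanish on pairs with disjoint reverse light cones; this yields $\mathrm{Var}_p(f_y) \le n \cdot 2^{2d}/4$. Choosing $y \in S$ one has $f_y \ge D := \mathrm{dist}(S, S')$ on $S'$, so
\[
D^2 \, p(S') \le \mathbb{E}_p[f_y^2 \mathbb{1}_{S'}] \le \mathbb{E}_p[f_y^2] = \mu_y^2 + \mathrm{Var}_p(f_y),
\]
where $\mu_y = \mathbb{E}_p[f_y]$. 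I then select $y \in S$ so that $\mu_y$ is small. Concretely, I would average $y$ over the conditional distribution $p|_S$ and use a further light-cone expansion (for pairs of qubits) to bound $\mathbb{E}_{X,y}[d_H(X,y)^2]$, extracting by pigeonhole a specific $y \in S$ with $\mathbb{E}_p[f_y^2] \le O(n \cdot 2^{3d}/p(S))$. Combining with the above Cauchy--Schwarz bound yields $D^2 \le O(n \cdot 2^{3d})/(p(S) p(S'))$, and then $\sqrt{p(S) p(S')} \ge \min\{p(S), p(S')\}$ gives $D \le 4 n^{1/2} 2^{3d/2}/\min\{p(S), p(S')\}$.

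The main obstacle is the last step: showing that a suitable $y \in S$ has $\mathbb{E}_p[d_H(X,y)^2] = O(n \cdot 2^{3d}/p(S))$. A naive average of this quantity over $y \sim p|_S$ produces cross-correlation terms $\Pr_{X \sim p, y \sim p|_S}[X_i \ne y_i, X_j \ne y_j]$ between two independent samples, which do not factor trivially even when $(X_i, X_j)$ are independent under~$p$, because the correlations of $y$ induced by conditioning on $S$ can be arbitrary. Handling these cross-terms requires a secondary application of the light-cone analysis to the joint distribution of $(X, y)$, and the extra factor of $2^d$ (yielding $2^{3d}$ rather than $2^{2d}$) should come precisely from this secondary expansion. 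Getting the exact constants in the stated inequality is the technical core of the argument.
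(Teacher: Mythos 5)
First, a point of reference: the paper does not prove this statement at all -- it is imported verbatim as Corollary~43 of Eldar and Harrow~\cite{eldarharrow} -- so there is no in-paper proof to compare against. Judged on its own terms, your proposal has a genuine gap, one you partly flag yourself, and the gap is not merely technical: the intermediate claim you hope to extract by pigeonhole is false. Take $d=0$, $U=I$ and $\varphi=\ket{+^n}$, so that $p$ is the uniform distribution, and let $S$ be a Hamming ball of measure $p(S)\geq 0.1$. Then \emph{every} string $y$ (in particular every $y\in S$) satisfies $\mathbb{E}_p[d_H(X,y)]=n/2$, hence $\mathbb{E}_p[f_y^2]\geq n^2/4$, which is nowhere near $O(n\cdot 2^{3d}/p(S))=O(n)$. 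Consequently your inequality $D^2\,p(S')\leq \mathbb{E}_p[f_y^2]$ can never certify anything better than $D=O(n)$ in this regime, whereas the Fact asserts $D=O(\sqrt{n})$ for two sets of constant measure -- which is exactly the (true) concentration-of-measure statement for the Hamming cube. The obstruction is structural, not a matter of constants: the mean $\mu_y$ of the distance to any \emph{single} reference string is generically $\Theta(n)$, so no choice of $y\in S$ and no refinement of the cross-term analysis can rescue this route.

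What is actually needed is a variance (concentration) bound for the $1$-Lipschitz function $g(x)=d_H(x,S)=\min_{y\in S}d_H(x,y)$, the distance to the whole set rather than to a point. Since $g$ vanishes on $S$ and is at least $D=\mathrm{dist}(S,S')$ on $S'$, one has $\mathrm{Var}_p(g)\geq p(S)\,p(S')\,D^2$, so a bound of the form $\mathrm{Var}_p(g)\leq 16\,n\,2^{3d}$ yields $D\leq 4n^{1/2}2^{3d/2}/\sqrt{p(S)p(S')}\leq 4n^{1/2}2^{3d/2}/\min\{p(S),p(S')\}$, reproducing the stated constants exactly. The difficulty -- and the real content of the Eldar--Harrow result -- is that $g$ is \emph{not} a sum of single-coordinate terms, so your pairwise-covariance computation (which is correct as far as it goes, and does give $\mathrm{Var}_p(f_y)\leq n2^{2d}/4$ for the coordinate-wise sum $f_y$) does not apply to it; controlling the spread of $g$ under a shallow-circuit output distribution requires a different mechanism, e.g.\ a martingale argument that tracks how conditioning on a prefix of output bits, ordered compatibly with the light cones, shifts the conditional expectation of $g$. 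Your proposal does not contain this ingredient, and the ``secondary light-cone expansion'' you gesture at for the cross-terms of two independent samples does not supply it.
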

Here $\mathrm{dist}(S,S')$ is the Hamming distance,
i.e. the minimum number of bit flips required to get from $S$ to $S'$.
We emphasize that Eq.~\eqref{Corol43} holds for all depth-$d$
circuits $U$ and all product states $\varphi$ ($\mathbb{Z}_2$-symmetry is not needed).

Given a bit string~$x$ let $\overline{x}$ be the bit-wise negation of~$x$.
Note that  $p(x)=p(\overline{x})$ since $U\varphi$ is $\mathbb{Z}_2$-symmetric.
Choose $S$ and $S'$ as the sets of all $n$-bit strings with the Hamming weight 
$\le n/3$ and  $\ge 2n/3$ respectively.
Then $p(S')=p(S)$ and $\mathrm{dist}(S,S')=n/3$.
Eq.~\eqref{Corol43} gives
\be
\label{pi(S)}
p(S)\le 12 n^{-1/2} 2^{3 d/2}.
\ee
Our strategy is to choose the Hamiltonian $H_n$ such that low-energy states
of $H_n$ are concentrated on bit strings with the Hamming weight close to $0$ or $n$
such that $p(S)$ is non-negligible. Then Eq.~\eqref{pi(S)}  provides a logarithmic lower bound on
the depth $d$ for symmetric low-energy states.

Suppose $G=(V,E)$ is a graph with $n$ vertices.
The  Cheeger constant of $G$ is defined as 
\begin{align}
\label{Cheeger}
h(G)&=\min_{\substack{
S\subseteq V\\
0<|S|\leq n/2
}} \frac{|\partial S|}{|S|}\ ,
\end{align}
where $\partial S\subseteq E$ 
is the subset of edges that have exactly one endpoint in~$S$. 
Families of expander graphs are infinite collections of
bounded degree graphs
$\{G_n\}_{n\in \calI}$ whose Cheeger constant is lower bounded by a constant,
i.e.  $h(G_n)\ge h>0$ for all $n\in \calI$. Explicit constructions
of degree-$3$ expanders can be found in~\cite{MORGENSTERN199444}.
Fix a family of degree-$3$ expanders  $\{G_n\}_{n\in \calI}$ 
and define $H_n$ as the ferromagnetic Ising model on the graph $G_n$, i.e.
\be
\label{FIM}
H_n=\frac12 \sum_{(u,v)\in E} (I - Z_u Z_v).
\ee
Here $Z_u$ is the Pauli-$Z$ applied to a site $u$ and $I$ is the identity.
Clearly, $H_n$ is $\mathbb{Z}_2$-symmetric,
each term in $H_n$ acts on two qubits and each qubit is involved in three terms. 
Thus $\{H_n\}_{n\in \calI}$ is a family of local Hamiltonians.
$H_n$ has 
$\mathbb{Z}_2$-symmetric ground states 
$\frac1{\sqrt{2}} (|0^n\ra \pm |1^n\ra)$ with zero energy.
Given a bit string $x$, let $\mathsf{supp}(x)=\{ j\in [n]\, : \, x_j=1\}$ be the support of $x$.
From Eqs.~\eqref{Cheeger},\eqref{FIM} one gets
\be
\label{low_energy1}
\la x|H_n|x\ra = |\partial\,  \mathsf{supp}(x)|\ge h\cdot \min{\{|x|, n-|x|\}},
\ee
where $|x|$ is the Hamming weight of $x$. Assume
Eq.~\eqref{eq:upperboundHamiltonianenergy} is false.
Then  $p(x)$ is a  
low-energy distribution such that
$\sum_x p(x) \la x|H_n|x\ra \le \epsilon n$.
By Markov's inequality, $p(x)$ has a non-negligible
weight on low-energy basis states, 
\be
\label{low_energy2}
p(S_{\mathrm{low}})\ge 1/2, \qquad S_{\mathrm{low}}=\{x \, : \, \la x|H_n|x\ra \le 2\epsilon n \}.
\ee
By Eq.~\eqref{low_energy1}, $\min{\{|x|, n-|x|\}} \le 2n\epsilon h^{-1}$ for all $x\in S_{\mathrm{low}}$.
Choose $\epsilon=h/6$. 
Then $S_{\mathrm{low}}\subseteq S\cup S'$ and 
$p(S_{\mathrm{low}})\le p(S)+p(S')=2p(S)$.
Here the last equality uses the symmetry of $p(x)$.
By Eq.~\eqref{low_energy2}, $p(S)\ge 1/4$.
Combining this and Eq.~\eqref{pi(S)} one gets
$n^{1/2}\le 48 \cdot 2^{3d/2}$.
We conclude that Eq.~\eqref{eq:upperboundHamiltonianenergy}
holds whenever $n>48^2 \cdot 8^d$. This proves Theorem~\ref{thm:nlts}.

The Hamiltonians Eq.~\eqref{FIM} are diagonal in the computational basis and have product ground states $|0^n\ra$ and $|1^n\ra$.  The presence of the $\mathbb{Z}_2$-symmetry
is therefore essential: the same family of Hamiltonians do not exhibit NLTS without it.
In this sense, the NLTS property here behaves similarly to topological order 
in 1D systems which  only exists under symmetry protection~\cite{GuWenSPT09, schuch11}.

Theorem~\ref{thm:nlts} implies restrictions on the performance of variational quantum algorithms for combinatorial optimization. Recall that the Quantum Approximate Optimization Algorithm (QAOA)~\cite{farhi2014} seeks to approximate the maximum of a cost function $C:\{0,1\}^n\rightarrow\mathbb{R}$ by encoding it into a Hamiltonian~$H_n=\sum_{x\in \{0,1\}^n}C(x)\proj{x}$. It variationally optimizes the expected energy of $H_n$ over quantum states of the form~$U(\beta,\gamma)\ket{+^n}$, where 
\begin{align*}
U(\beta,\gamma)=\prod_{k=1}^p e^{i\beta_kB }e^{i\gamma_k H_n},
\end{align*} 
and where $B=\sum_{j=1}^n X_j$. 
The integer $p\ge 1$ is called the QAOA {\em level}. It controls non-locality of the variational circuit. 

A paradigmatic test case for QAOA is the MaxCut problem.
Given a graph $G_n=(V,E)$ with $n$ vertices,
the corresponding MaxCut Hamiltonian is defined by Eq.~\eqref{FIM}.
The maximum energy of $H_n$ coincides with the number of edges in the maximum cut
of $G_n$. Crucially,  the QAOA circuit $U(\beta,\gamma)$ 
with the Hamiltonian $H_n$
as well as the initial QAOA state $\ket{+^n}$ obey the 
$\mathbb{Z}_2$-symmetry property.
Furthermore, the circuit $U(\beta,\gamma)$ has depth
$O(Dp)$, where $D$ is the maximum vertex degree of $G_n$.
Specializing Theorem~\ref{thm:nlts} to bipartite
graphs we obtain an upper  
bound on the approximation ratio 
achieved by the level-$p$ QAOA circuits 
for the MaxCut cost function (see Appendix~A for a proof):
\begin{corol}
For every integer $D\ge 3$ there exists an
infinite family  of bipartite $D$-regular graphs $\{G_n\}_{n\in \calI}$
such that the Hamiltonians $H_n$
defined in Eq.~\eqref{FIM} obey
\begin{align}
\label{eq:mainclaimqaoa}
\frac1{|E|}
\bra{+^n} U^{-1}H_n U\ket{+^n}
\le \frac{5}{6} + \frac{\sqrt{D-1}}{3D}
\end{align}
for any level-$p$ QAOA circuit $U\equiv U(\beta,\gamma)$ 
as long as 
$p<(1/3\, \log_2n - 4)D^{-1}$.
\label{corol:QAOA1}
\end{corol}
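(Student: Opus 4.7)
The plan is to invoke Theorem~\ref{thm:nlts} after a basis change that converts the MaxCut objective into the ferromagnetic-Ising form analyzed there. Fix an infinite family $\{G_n\}_{n\in\calI}$ of bipartite $D$-regular Ramanujan graphs (their existence follows from the Marcus--Spielman--Srivastava theorem, or by taking bipartite double covers of Lubotzky--Phillips--Sarnak expanders), with bipartition $V=A\sqcup B$, and set $W=\prod_{v\in B}X_v$. Since every edge of $G_n$ has exactly one endpoint in $B$, we have $WZ_uZ_vW=-Z_uZ_v$ for each $(u,v)\in E$, hence
\[
WH_nW=|E|\,I-H_n.
\]
Moreover $W$ commutes with $X^{\otimes n}$ and fixes $\ket{+^n}$, so $WU$ is $\ZZ_2$-symmetric whenever $U$ is, and using $W^2=I$,
\[
\bra{+^n}(WU)^{\dag}H_n(WU)\ket{+^n}=|E|-\bra{+^n}U^{\dag}H_nU\ket{+^n}.
\]

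Next I would apply Theorem~\ref{thm:nlts} to the symmetric circuit $WU$ and the symmetric product state $\ket{+^n}$. Inspecting the proof of Theorem~\ref{thm:nlts}, the constant produced there is $\epsilon=h(G_n)/6$, where $h(G_n)$ is the Cheeger constant of Eq.~\eqref{Cheeger}. For $D$-regular Ramanujan graphs the spectral bound $\lambda_2\le 2\sqrt{D-1}$ together with the discrete Cheeger inequality $h(G)\ge (D-\lambda_2)/2$ gives $h(G_n)\ge (D-2\sqrt{D-1})/2$. Theorem~\ref{thm:nlts} therefore yields
\[
|E|-\bra{+^n}U^{\dag}H_nU\ket{+^n}>\frac{D-2\sqrt{D-1}}{12}\,n,
\]
and dividing by $|E|=Dn/2$ gives
\[
\frac{\bra{+^n}U^{\dag}H_nU\ket{+^n}}{|E|}<1-\frac{D-2\sqrt{D-1}}{6D}=\frac{5}{6}+\frac{\sqrt{D-1}}{3D},
\]
which is exactly Eq.~\eqref{eq:mainclaimqaoa}.

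It remains to translate the hypothesis $n\ge 2^{cd}$ of Theorem~\ref{thm:nlts} into the stated condition on $p$. By K\"onig's edge-coloring theorem a bipartite $D$-regular graph has chromatic index $D$, so $e^{i\gamma_k H_n}$ decomposes into $D$ parallel layers of commuting two-qubit Ising rotations, and each QAOA layer $e^{i\beta_k B}e^{i\gamma_k H_n}$ has depth at most $D+1$. Thus $U(\beta,\gamma)$ has depth at most $p(D+1)$, and $WU$ has depth at most $p(D+1)+1$; substituting this into the explicit bound $n>48^2\cdot 8^d$ from the proof of Theorem~\ref{thm:nlts} and solving for $p$ produces the stated window $p<(\tfrac{1}{3}\log_2 n-4)D^{-1}$, up to absorbing $O(1)$ shifts. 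The only genuinely new ingredient beyond Theorem~\ref{thm:nlts} is the bipartite Ramanujan existence together with the Cheeger bound; I expect the main care to be needed in bookkeeping the depth conversion and the Cheeger constant, since the $5/6+\sqrt{D-1}/(3D)$ form in Eq.~\eqref{eq:mainclaimqaoa} pins down the Ramanujan spectral gap exactly and leaves no slack.
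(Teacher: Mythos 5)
Your argument follows essentially the same route as the paper's: the conjugation by $W=\prod_{v\in B}X_v$ is exactly the paper's gauge-transformation step specialized to the bipartition (Lemma~\ref{lem:covarianceqaoa} together with Eq.~\eqref{eq:hmaxcutrelationx}), the graph family is the same Marcus--Spielman--Srivastava bipartite Ramanujan family, and the spectral Cheeger bound $h\ge (D-2\sqrt{D-1})/2$ combined with $\epsilon=h/6$ reproduces the constant $5/6+\sqrt{D-1}/(3D)$ exactly as in the paper. The energy part of your derivation is correct, including the application of Theorem~\ref{thm:nlts} to the symmetric circuit $WU$ and the state $\ket{+^n}$.

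The one place where the write-up does not actually deliver the stated claim is the depth accounting, and this cannot be waved away as an ``$O(1)$ shift.'' With depth $d=p(D+1)+1$, the requirement $n>48^2\cdot 8^{d}$ translates into $p<(\tfrac13\log_2 n-c)(D+1)^{-1}$ for some constant $c$; replacing $(D+1)^{-1}$ by $D^{-1}$ rescales the admissible range of $p$ by a factor $D/(D+1)$, and since $p$ is permitted to grow like $\log_2 n$, the discrepancy $p(D+1)-pD=p$ is unbounded. To obtain the stated window you need depth exactly $pD$, which is the content of the paper's Lemma~\ref{lem:qaoablow-up}: in a proper $D$-edge-coloring of a $D$-regular bipartite graph every color class is a perfect matching covering all vertices, so the single-qubit mixer layer $e^{i\beta_k B}$ (and likewise your extra layer $W$) can be absorbed into the two-qubit gates of one of the matchings without increasing the depth. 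With $d=pD$ the hypothesis $p<(\tfrac13\log_2 n-4)D^{-1}$ gives $8^{pD}<2^{\log_2 n-12}=n/4096$, hence $48^2\cdot 8^{pD}<2304\,n/4096<n$, as required.
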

Note that  any bipartite graph with a set of edges $E$ has maximum cut size $|E|$.
In this case, the left-hand side
of Eq.~\eqref{eq:mainclaimqaoa} coincides with the approximation ratio, i.e., the ratio between the expected value of the MaxCut cost function on the (optimal) level-$p$ variational state
and the maximum cut size.
Thus Corollary~\ref{corol:QAOA1}  provides an explicit upper bound on 
the approximation ratio achieved by level-$p$ QAOA. 
Such bounds were previously known only for $p=1$~\cite{farhi2014}. Statement~\eqref{eq:mainclaimqaoa} severely limits the performance of QAOA at any constant level~$p$, rigorously establishing a widely believed conjecture~\cite{hastings2019}: constant-level QAOA is inferior to the classical Goemans-Williamson algorithm for MaxCut, which achieves an approximation ratio of approximately $0.878$ on an arbitrary graph~\cite{goemanswilliamson1995}.
Indeed, the right-hand side of  Eq.~\eqref{eq:mainclaimqaoa} is approximately $5/6\approx 0.833$
for large vertex degree $D$.

QAOA circuits $U(\beta,\gamma)$  possess a  form of locality 
which is stronger than the one assumed in  Theorem~\ref{thm:nlts}.
Indeed, if $p$ and $D$ are constants,
the unitary
$U(\beta,\gamma)$  can be realized by a constant-depth circuit composed of  {\em nearest-neighbor} gates, i.e., the circuit is geometrically local. 

A natural question is whether more general bounds on the variational energy can be established for states generated by 
geometrically local
$\ZZ_2$-symmetric circuits. Of particular interest are graphs that lack the expansion property, such as regular lattices, where the arguments used in the proof of Theorem~\ref{thm:nlts} no longer apply.
A simple model of this type is the {\em ring of disagrees}~\cite{farhi2014}.
It describes the MaxCut problem on the cycle graph $\ZZ_n$. 

Quite recently, Ref.~\cite{mbengfaziosantoro2019} proved that the optimal approximation ratio achieved by level-$p$ QAOA for the ring of disagrees is bounded above by $(2p+1)/(2p+2)$ for all $p$ and conjectured that this bound is tight. 
Here we prove a version of this conjecture for  arbitrary geometrically local $\ZZ_2$-symmetric circuits.
To quantify the notion of geometric locality, let us say that 
a unitary $U$ acting on $n$ qubits located at vertices of the cycle graph
has range $R$ if the operator
$U^\dag Z_j U$ has support on the interval $[j-R,j+R]$
for any qubit $j$. For example,  the
level-$p$ QAOA circuit associated with the ring of disagrees has range $R=p$.

\begin{theorem}
\label{thm:1D}
Let $H_n$ be the ring of disagrees Hamiltonian,
\[
H_n=\frac12 \sum_{p\in \ZZ_n}(I-Z_p Z_{p+1}),
\]
where $n$ is even. Let $U$ be a $\ZZ_2$-symmetric unitary with range $R<n/4$.
Then 
\be
\label{cycle1}
\frac1n\la +^n|U^\dag H_n U|+^n\ra \le \frac{2R+1/2}{2R+1}.
\ee
This  bound  is tight whenever
$n$ is a multiple of $2R+1$.
\end{theorem}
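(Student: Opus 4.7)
My plan is to reduce the statement to a classical linear-programming bound on the joint distribution of the commuting observables $\tau_p=Z_pZ_{p+1}$, and then exhibit a saturating state. Since $\la +^n|U^\dag H_n U|+^n\ra = \tfrac12\sum_p(1-\la\tau_p\ra_\Psi)$ with $|\Psi\ra = U|+^n\ra$, it suffices to prove $\sum_p \la\tau_p\ra_\Psi \ge -2Rn/(2R+1)$. The operators $\tau_p$ mutually commute with eigenvalues $\pm1$, so they induce a joint distribution $\mu$ on $\{-1,+1\}^n$ in the state $|\Psi\ra$, turning the problem into a classical optimization over $\mu$.

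The key input from range and $\ZZ_2$-symmetry is a parity constraint. Telescoping around the cycle gives $\prod_{p=a}^{a+2R}\tau_p = Z_a Z_{a+2R+1}$. Now $U^\dag Z_a U$ and $U^\dag Z_{a+2R+1}U$ are supported on the adjacent arcs $[a-R,a+R]$ and $[a+R+1,a+3R+1]$ of the cycle, which are disjoint whenever $4R+2\le n$; for $n$ even this is guaranteed by the hypothesis $R<n/4$. In the product state $|+^n\ra$, expectations of operators with disjoint supports factorise, so $\la Z_a Z_{a+2R+1}\ra_\Psi = \la Z_a\ra_\Psi\,\la Z_{a+2R+1}\ra_\Psi$; each single-site factor vanishes by $\ZZ_2$-symmetry of $|\Psi\ra$. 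Hence $\EE_\mu\bigl[\prod_{p=a}^{a+2R}\tau_p\bigr]=0$ for every $a\in\ZZ_n$.

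With this in hand I apply a per-window LP. For each window $W_a=\{a,\ldots,a+2R\}$ of $2R+1$ consecutive edges, let $q_k$ denote the probability that exactly $k$ of the $\tau_p$ with $p\in W_a$ equal $-1$. The constraint above becomes $\sum_{k\text{ even}}q_k=\sum_{k\text{ odd}}q_k=\tfrac12$. Maximizing the expected cut $\sum_k k\,q_k$, one bounds the even and odd parts separately by $2R\cdot\tfrac12=R$ and $(2R+1)\cdot\tfrac12$, giving the per-window bound $(4R+1)/2$. Summing over the $n$ windows and using that every edge lies in exactly $2R+1$ windows yields $(2R+1)\sum_p\EE[\xi_p]\le n(4R+1)/2$, which rearranges to Eq.~\eqref{cycle1}.

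For the tightness when $n$ is a multiple of $2R+1$, I would apply to $|+^n\ra$ a circuit that prepares a product over disjoint length-$(2R+1)$ blocks of the antiferromagnetic cat state $\tfrac{1}{\sqrt2}(|0101\cdots01\ra+|1010\cdots10\ra)$. Every window of $2R+1$ consecutive edges then contains exactly $2R$ intra-block edges (always cut) and one inter-block edge (cut with probability $\tfrac12$), saturating both the per-window LP and the averaging. The delicate point is that realising such a block-cat state with a unitary of range exactly $R$ (rather than $2R$, as a naive block unitary would give) requires a careful translation-invariant $\ZZ_2$-symmetric construction with tapered range near block boundaries; this is where I expect most of the work to lie in the tightness direction.
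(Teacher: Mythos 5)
Your upper-bound argument is correct, and it takes a genuinely different route from the paper's. The paper first uses the complementation identity $\overline{X}H_n\overline{X}+H_n=nI$ with $\overline{X}=(XI)^{\otimes n/2}$ to convert the upper bound for $U$ into a lower bound for the circuit $V=\overline{X}U$, and then applies a union bound, $\epsilon_{j,j+2R+1}\le\sum_{i=j}^{j+2R}\epsilon_{i,i+1}$ with $\epsilon_{j,j+2R+1}=1/2$, averaged over $j$. You instead stay in the original picture and extract from the same key fact --- $\langle Z_aZ_{a+2R+1}\rangle_\Psi=0$ by disjointness of supports plus $\ZZ_2$-symmetry, which in your language is the unbiased-parity constraint $\EE_\mu[\prod_{p\in W_a}\tau_p]=0$ --- a per-window linear program whose optimum is $(4R+1)/2$, then double-count over the $n$ windows. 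The two per-window bounds are numerically identical ($2R+1/2$ cut edges per window), and your parity constraint is literally equivalent to the paper's $\epsilon_{j,j+2R+1}=1/2$, but your derivation avoids the complementation step entirely and makes the "which distributions over cut patterns are achievable" structure explicit; the paper's union bound is perhaps more elementary, while your LP viewpoint generalizes more transparently (e.g., it immediately shows what additional constraints one would need to improve the bound). You also correctly handled the parity subtlety that $R<n/4$ together with $n$ even forces $4R+2\le n$, which is exactly what disjointness of the two arcs requires.

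The tightness direction, however, is left as a sketch with an acknowledged gap: you name the correct saturating state (a product of antiferromagnetic cat states on blocks of $2R+1$ qubits) and correctly identify the obstacle (a generic block unitary on $2R+1$ qubits only has range $2R$), but you do not resolve it. The paper's resolution (Appendix~B) is a star-shaped circuit: all entangling gates emanate from the central qubit $c=R+1$ of each block, realized as a product of commuting two-qubit $\ZZ_2$-symmetric gates $\exp[-i(\pi/4)Z_cZ_{c\pm j}]$ followed by single-qubit $X$-rotations, so that for any off-center qubit $q$ the conjugated observable $V^\dagger O_qV$ is supported only on the sub-interval between $q$ and the near end of the block, which lies inside $[q-R,q+R]$; only the central qubit spreads over the whole block, and for it $[c-R,c+R]$ is the whole block anyway. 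The antiferromagnetic pattern is then obtained by composing with $(XI)^{\otimes n/2}$, which has range $0$. Without this (or an equivalent) construction your proof of the "tight whenever $n$ is a multiple of $2R+1$" clause is incomplete.
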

Since one can always round $n$ to the nearest multiple of $2R+1$, the
bound Eq.~\eqref{cycle1} is tight for all $n$  up to corrections $O(1/n)$,
assuming that $R=O(1)$.

Let us first prove the upper bound Eq.~\eqref{cycle1}.  
Define $\overline{X}=(XI)^{\otimes n/2}$.
Then
\be
\label{cycle2}
\overline{X}H_n\overline{X} + H_n = nI
\ee
Let $V=\overline{X} U$. Note that $V$ is a $\ZZ_2$ symmetric circuit
with range $R$. Taking the expected value of Eq.~\eqref{cycle2} on the 
state $U|+^n\ra$, one infers that
Eq.~\eqref{cycle1} holds whenever
\be
\label{cycle3}
\frac1n \la +^n |V^\dag H_n V|+^n\ra \ge 1-\frac{2R+1/2}{2R+1} = \frac1{2(2R+1)}.
\ee
Thus it suffices to prove that Eq.~\eqref{cycle3} holds for
any $\ZZ_2$-symmetric range-$R$ circuit $V$.
For each $j,k\in \ZZ_n$ define
\[
\epsilon_{j,k} = \frac12\la +^n|V^\dag (I-Z_j Z_k)V|+^n\ra.
\]
Let $\mathrm{dist}(j,k)$ be the distance between $j$ and $k$
with respect to the cycle graph $\ZZ_n$. 
We claim that  
\be
\label{ZZ=0}
\epsilon_{j,k} =1/2 \quad \mbox{if} \quad  \mathrm{dist}(j,k)>2R.
\ee
Indeed, 
$\la +^n|V^\dag Z_i  V |+^n\ra=0$  for any qubit $i$
since $V|+^n\ra$ and $Z_iV|+^n\ra$ are 
eigenvectors of $X^{\otimes n}$
with eigenvalues~$1$ and $-1$.  Such eigenvectors
have to be orthogonal.
From $\mathrm{dist}(j,k)>2R$ one infers that $V^\dag Z_jV$ and $V^\dag Z_kV$
have disjoint support. Thus
\begin{align*}
\la +^n|V^\dag Z_j Z_k V |+^n\ra=\la +^n|(V^\dag Z_jV)(V^\dag Z_k V) |+^n\ra  \\
=\la +^n|V^\dag Z_j V |+^n\ra\cdot \la +^n|V^\dag Z_k V |+^n\ra=0.
\end{align*}
This proves Eq.~\eqref{ZZ=0}. 
Suppose one prepares the state $V|+^n\ra$ and measures a pair of qubits $j<k$
in the standard basis. 
Then $\epsilon_{j,k}$ is the probability that the measured values on qubits $j$ and $k$ disagree.
By the union bound,
\be
\label{ring1}
\epsilon_{j,k}\le \sum_{i=j}^{k-1} \epsilon_{i,i+1}.
\ee
Indeed, if qubits $j$ and $k$ disagree, at least one pair of consecutive
qubits located in the interval $[j,k]$ must disagree.
Set $k=j+2R+1$.
Then $\epsilon_{j,k}=1/2$ by Eq.~\eqref{ZZ=0}.
Take the expected value
of Eq.~\eqref{ring1} with respect to random uniform $j\in \ZZ_n$.
This gives 
\[
\frac12 \le \frac{2R+1}{n} \sum_{i\in \ZZ_n} \epsilon_{i,i+1} = \frac{2R+1}{n}\la +^n|V^\dag H_n V|+^n\ra
\]
proving Eq.~\eqref{cycle3}. In Appendix~B we construct a $\ZZ_2$-symmetric range-$R$ circuit~$U$ 
such that $U|+^n\ra$ is a tensor product of GHZ-like states on consecutive segments of $2R+1$ qubits. We show that such circuit saturates the upper bound Eq.~\eqref{cycle1}.
This completes the proof of Theorem~\ref{thm:1D}.

Concerns about limitations of QAOA have previously been voiced by Hastings~\cite{hastings2019} who showed analytically that certain local classical algorithms match the performance of level-$1$ QAOA for Ising-like cost functions with multi-spin interactions. Hastings also gave numerical evidence for the same phenomenon for MaxCut with $p=1$, and argued that this should extend to $p>1$~\cite{hastings2019}.
 
Motivated by these limitations, we propose a non-local modification of QAOA
which we call the recursive quantum approximate optimization algorithm (RQAOA).
To sketch the main ideas behind RQAOA consider  an Ising-like  Hamiltonian
\begin{equation}
\label{Ising1}
H_n=\sum_{(p,q)\in E} J_{p,q} Z_p Z_q
\end{equation}
defined on a  graph $G_n=(V,E)$ with $n$ vertices.
Here $J_{p,q}$ are arbitrary real coefficients. 
RQAOA aims to approximate the maximum energy
$\max_z \la z|H_n|z\ra$, where $z\in \{1,-1\}^n$. It consists of the following steps.

First, run the standard QAOA to maximize the expected value of $H_n$
on the  state $|\psi\ra=U(\beta,\gamma)|+^n\ra$.
For every edge $(j,k)\in E$ compute
$M_{j,k}=\la \psi^*|Z_j Z_k|\psi^*\ra$, where $\psi^*$
is the optimal variational state.

Next, find a pair of qubits $(i,j)\in E$ with the largest magnitude of $M_{i,j}$
(breaking ties arbitrarily).
The corresponding variables $Z_i$ and $Z_j$ are correlated if $M_{i,j}>0$
and anti-correlated if $M_{i,j}<0$.
Impose the constraint 
\be
\label{RQAOA1}
Z_j = \mathrm{sgn}{(M_{i,j})} Z_i
\ee
and substitute it into the Hamiltonian $H_n$ to eliminate the variable $Z_j$.
For example, a term $Z_j Z_k$ with $k\notin \{i,j\}$ gets mapped to $\mathrm{sgn}{(M_{i,j})}Z_i Z_k$.
The term $J_{i,j}Z_i Z_j$ gets mapped to a constant energy shift
$J_{i,j}\mathrm{sgn}{(M_{i,j})}$. All other terms remain unchanged.
This yields a new Ising Hamiltonian $H_{n-1}$ that depends on $n-1$ variables.
By construction, the maximum  energy of $H_{n-1}$
coincides with the maximum energy of $H_n$
over the subset of assignments  satisfying the  constraint Eq.~\eqref{RQAOA1}.

Finally, call RQAOA recursively to  maximize the expected value of  $H_{n-1}$.
Each recursion step  eliminates one variable from the cost function.
The recursion stops when the number of variables reaches
some specified threshold value $n_c\ll n$.
The remaining instance of the problem with $n_c$ variables
is then solved by a purely classical algorithm (for example, by a brute force method).
Thus the value of $n_c$ controls how the workload is distributed between quantum and classical computers. We describe a generalization of RQAOA applicable to Ising-like cost functions with multi-spin interactions in Appendix~C.

Imposing a constraint of the form~\eqref{RQAOA1} can be viewed as rounding correlations
among the variables $Z_i$ and $Z_j$. Indeed, the constraint demands that
these variables must be perfectly correlated or anti-correlated. This is analogous to rounding fractional solutions obtained by solving linear programming relaxations of combinatorial optimization problems. We note that reducing the size of a problem to the point that it can be solved optimally by brute force is a widely used and effective approach in combinatorial optimization.

\begin{figure*}[t]
\centerline{\includegraphics[height=5.2cm]{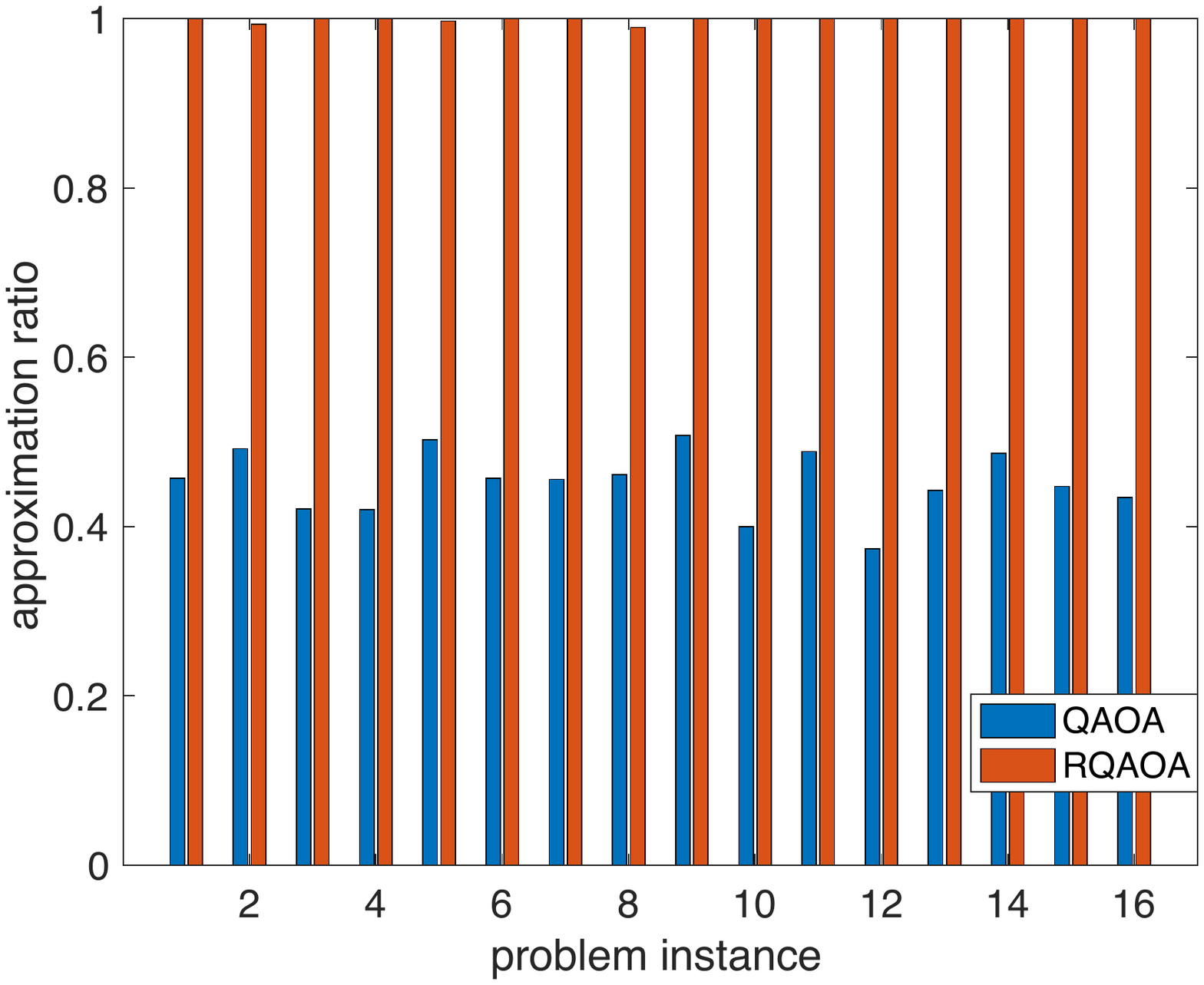} $\qquad$
\includegraphics[height=5.2cm]{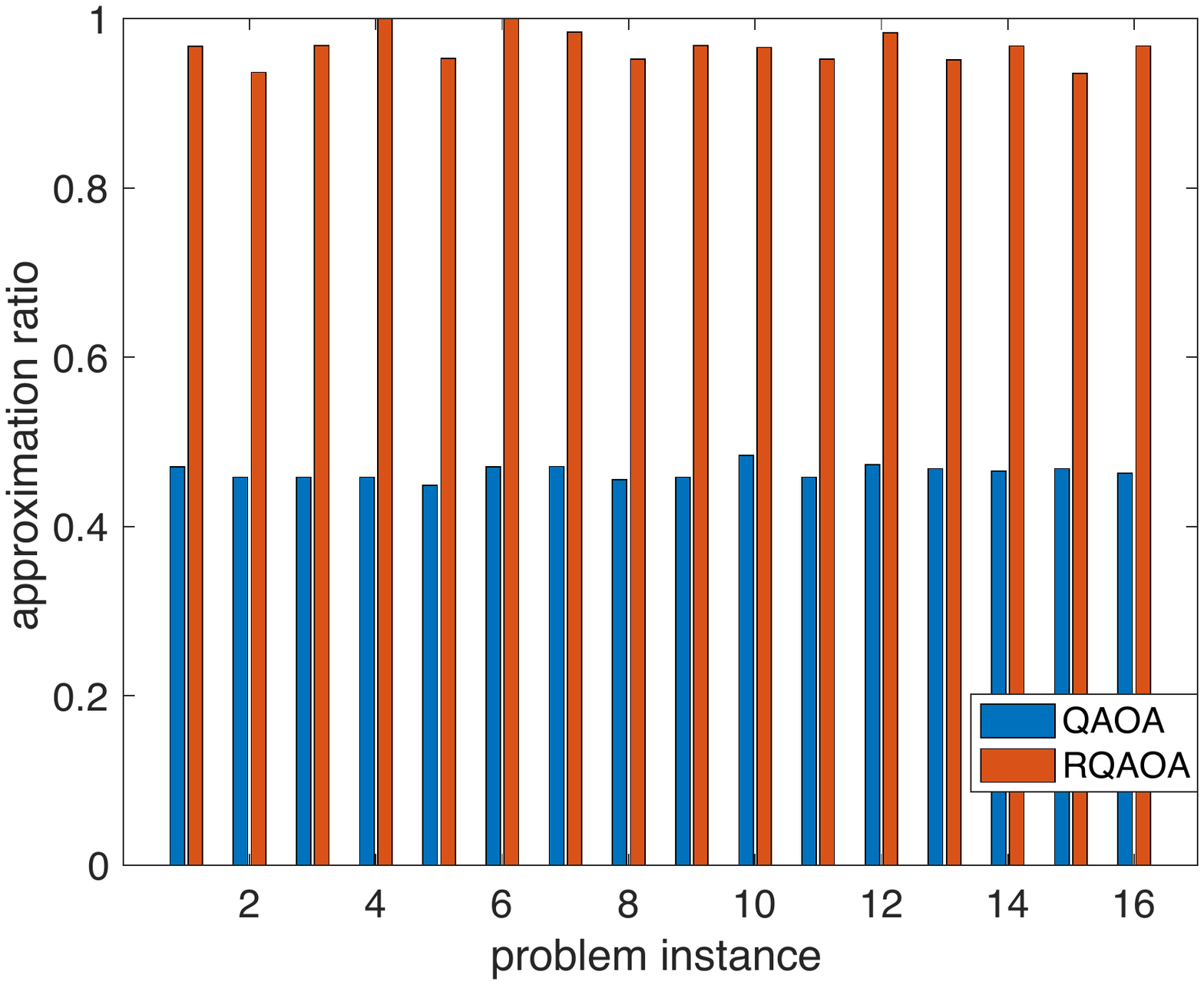} }
\caption{\label{fig:QAOA}
Approximation ratios achieved by the level-1 QAOA (blue) and  RQAOA (red)
for the Ising-type Hamiltonian $H_n$ defined Eq.~\eqref{Ising1}
with $n=32$ (left) and $n=100$ (right).
We consider  $16$ problem instances with random $3$-regular graphs and random couplings $J_{p,q}=\pm 1$. 
The cutoff value for variable elimination was chosen as $n_c=8$ (left) and $n_c=30$ (right).
The approximation ratio achieved by 
a given algorithm is defined as $\la z|H_n|z\ra/E_{\mathrm{max}}$, where 
$z\in \{1,-1\}^n$ is the algorithm's output and $E_{\mathrm{max}}=\max_z\la z|H_n|z\ra$.
}
\end{figure*}

We compare the performance of the standard QAOA, RQAOA, and local classical algorithms by 
considering the Ising Hamiltonians in Eq.~\eqref{Ising1} with couplings $J_{p,q}=\pm 1$ defined on the cycle graph. In Appendix~D we prove: 
\begin{theorem} \label{thm:classicalvsqaoa}
For each integer $n$ divisible by $6$ there is
a family of $2^{n/3}$ Ising Hamiltonians of the form 
$H_n=\sum_{k\in\mathbb{Z}_n} J_k Z_k Z_{k+1}$ with $J_k\in \{1,-1\}$ 
such that the following holds for all Hamiltonians in the family:\\
(i)  There is a local classical algorithm which achieves the approximation ratio~$1$.\\
(ii)  Level-$p$ QAOA achieves an approximation ratio of at most $p/(p+1)$.\\
(iii) Level-$1$ RQAOA  achieves the approximation ratio~$1$.
\end{theorem}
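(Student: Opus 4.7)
The plan is to exhibit a family of coupling patterns $\{J_k\}_{k\in\ZZ_n}$ with a periodic block structure --- for concreteness partitioning $\ZZ_n$ into $n/3$ consecutive blocks of three edges, each carrying one binary parameter, with the divisibility $6\mid n$ used to fix the global sign $\prod_k J_k$ throughout the family. This yields $2^{n/3}$ Hamiltonians all sharing the same global frustration structure. The block patterns will be chosen so that each qubit's optimal spin is determined by the couplings in a bounded neighborhood, while the specific sign structure will frustrate range-$p$ $\ZZ_2$-symmetric circuits.

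For part (i), the local classical algorithm scans each qubit's $O(1)$-sized neighborhood, reads off the block labels of the containing blocks, and outputs the corresponding satisfying assignment (unique up to the global $\ZZ_2$ flip). Because the family is designed to be globally unfrustrated, this already gives approximation ratio $1$. For part (iii), I would compute the level-$1$ QAOA correlations $M_{j,k}=\langle +^n|U(\beta_1,\gamma_1)^\dagger Z_jZ_k U(\beta_1,\gamma_1)|+^n\rangle$ analytically; since level-$1$ QAOA has range $1$ on the cycle, each $M_{j,j+1}$ depends only on $(\beta_1,\gamma_1)$ and on the triple $(J_{j-1},J_j,J_{j+1})$, so it takes only a handful of distinct values across the $n$ edges. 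After optimizing $(\beta_1,\gamma_1)$, the edge of maximum $|M_{i,j}|$ has sign $\mathrm{sgn}(M_{i,j})$ agreeing with the desired satisfying assignment on that edge; imposing the RQAOA constraint~\eqref{RQAOA1} then correctly eliminates one variable and returns a Hamiltonian on $n-1$ qubits whose coupling pattern still lies in an analogously structured family (up to a single modified block at the merge point). Inducting down to the cutoff $n_c$ yields ratio $1$.

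Part (ii) is the main obstacle. I would follow the template of the proof of Theorem~\ref{thm:1D}: the level-$p$ QAOA unitary has range $R=p$ on the cycle, and $\ZZ_2$-symmetry of $|+^n\rangle$ and of $U$ forces $\langle +^n|U^\dagger Z_jZ_k U|+^n\rangle=0$ whenever $\mathrm{dist}(j,k)>2p$. Defining sign-aware disagreement quantities $\epsilon_k=(1-J_kJ_\ell\langle +^n|U^\dagger Z_kZ_\ell U|+^n\rangle)/2$ for appropriately chosen partner indices $\ell$ within each length-$(p+1)$ window of the family's block structure, a union-bound argument over each window should yield a lower bound on its total disagreement, which converts into the stated upper bound $p/(p+1)$ on the approximation ratio after averaging over window positions. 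The delicate step is to identify, in each window, a pair of edges whose product-of-signs pattern forces their QAOA correlation to vanish by the $\ZZ_2$ argument, so that one full unit of energy is lost per $(p+1)$-edge window --- as opposed to the half-unit per $(2p+1)$-edge window available in the uniform-sign ring of disagrees of Theorem~\ref{thm:1D}. This is where the specific frustrated sign structure of the family plays the essential role: a pure locality-plus-$\ZZ_2$ argument would only reproduce the weaker $(2p+1/2)/(2p+1)$ bound, and strengthening to $p/(p+1)$ requires exploiting the particular placement of negative couplings within each block.
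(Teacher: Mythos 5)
Your construction (blocks of three edges with one sign bit each, $2^{n/3}$ unfrustrated instances), your $1$-local classical algorithm for part (i), and your level-$1$ correlation computation plus induction for part (iii) all match the paper's proof (the paper takes $J_{3a}=J_{3a+1}=(-1)^{s_a}$, $J_{3a+2}=1$, and for (iii) uses the explicit formula $\langle Z_iZ_{i+1}\rangle=\tfrac12 J_i\sin(4\beta)\sin(4\gamma)$, $|\langle Z_iZ_{i+2}\rangle|\le 1/4$, so the maximal-magnitude correlation always sits on a nearest-neighbor edge with sign $J_i$ and the reduced instance is again a $\pm1$ Ising cycle of the same frustration parity).

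Part (ii), however, contains a genuine gap, and your stated plan for it would fail. First, you miss the key structural fact the paper uses: each instance $H(J(s))$ is conjugate to the uniform ring $\sum_k Z_kZ_{k+1}$ by a product of Pauli-$X$ operators on the vertices $3a+1$ with $s_a=1$, and the QAOA approximation ratio is invariant under such gauge transformations (Lemma~\ref{lem:covarianceqaoa}: the conjugation commutes with $B$ and stabilizes $\ket{+^n}$). Consequently QAOA performs \emph{identically} on every member of the family and on the uniform ring; the sign placement cannot ``frustrate'' the circuit or force any additional correlations to vanish. Your proposed mechanism is therefore unsound on two counts: (a) the $\mathbb{Z}_2$-symmetry argument yields $\langle +^n|U^\dagger Z_jZ_kU|+^n\rangle=0$ only from disjointness of supports, i.e.\ from $\mathrm{dist}(j,k)>2R$, and is completely insensitive to coupling signs; and (b) by gauge invariance, any bound you could extract from the structured instances would equally hold for the uniform ring, so the signs buy nothing. (Note also the instances are \emph{unfrustrated}, $\prod_kJ_k=+1$, contrary to your description.) Second, the target bound $p/(p+1)$ for the $\pm1$-normalized Hamiltonian is exactly $2\cdot\frac{2p+1}{2p+2}-1$, i.e.\ it is the Mbeng--Fazio--Santoro bound $\qaoap{\HZn^{\mathrm{MaxCut}}}{p}\le(2p+1)/(2p+2)$ restated after the affine change of normalization $H\mapsto\frac12(nI-H)$; the paper obtains (ii) by citing that result and applying the gauge lemma. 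It is \emph{not} derivable from the locality-plus-symmetry window argument of Theorem~\ref{thm:1D}, which only gives $(2R+1/2)/(2R+1)$ in MaxCut normalization, i.e.\ $2p/(2p+1)>p/(p+1)$ after renormalization. So if you insist on a self-contained proof of (ii), you would need a genuinely QAOA-specific analysis of the uniform ring (as in the cited reference), not a sign-aware refinement of the union bound.
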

Our definition of local classical algorithms follows~\cite{hastings2019}. We also show that the level-$1$ RQAOA achieves the optimal approximation ratio for any 1D~Ising model with coupling coefficients~$J_k\in \{1,-1\}$. 

Finally, we numerically compare level-$1$ versions of QAOA and RQAOA. We consider  Ising-type
Hamiltonians Eq.~\eqref{Ising1} with random couplings $J_{p,q}=\pm 1$ defined
on random $3$-regular graphs. For each problem instance we compute the optimal
approximation ratios achieved by the two algorithms, see Figure~\ref{fig:QAOA}. The level-$1$ RQAOA significantly outperforms the standard level-$1$ QAOA for these problem instances. The details of this simulation and further discussion of RQAOA can be found in Appendix~C.4.

\noindent
{\em Acknowledgments.}
The authors are grateful to Giacomo Nannicini and Kristan Temme for helpful discussions.
SB acknowledges the support of the IBM Research Frontiers Institute and funding from the MIT-IBM Watson AI Lab under the project Machine Learning in Hilbert space. ET acknowledges the support of the Natural Sciences and Engineering Research Council of Canada (NSERC) and funding provided by the Institute for Quantum Information and Matter, an NSF Physics Frontiers Center (NSF Grant No. PHY- 1733907). RK is supported by the Technical University of Munich -- Institute for Advanced Study, funded by the German Excellence Initiative and the European Union Seventh Framework Programme under grant agreement no.~291763. RK and AK gratefully acknowledge support by the DFG cluster of excellence 2111 (Munich Center for Quantum Science and Technology).

\onecolumngrid

\appendix
\renewcommand{\thesubsection}{\Alph{section}.\arabic{subsection}}%... from subsections

\section{Proof of Corollary~1}
In this appendix, we give a proof of Corollary~1 in the main text.
Here and below, we will denote the expected approximation ratio achieved by the QAOA with Hamiltonian~$H$ as
\begin{align} 
\qaoap{H}{p}&=\left(\max_{\beta,\gamma\in \mathbb{R}^p} \bra{\Psi_H(\beta,\gamma)}H\ket{\Psi_H(\beta,\gamma)}\right) \cdot \left(\max_{x\in \{0,1\}^n}\bra{x}H\ket{x}\right)^{-1}\ ,
\end{align}
where 
\begin{align}
\ket{\Psi_H(\beta,\gamma)}&=U_H(\beta,\gamma)\ket{+^n}\qquad\textrm{ and }\qquad U_H(\beta, \gamma) =\prod_{m=1}^p  \left(e^{i\beta_m B}e^{i\gamma_m H}\right)\ \label{eq:qaoaunitary}
\end{align}
for $\beta, \gamma \in \mathbb{R}^p$ and where $B=\sum_{j=1}^n X_j$. Let us first record a few general features of the QAOA for later use.

Let $G=(V,E)$ be a graph, 
$n=|V|$, $m=|E|$, 
and let $J=(J_e)_{e\in E}\in\mathbb{R}^E$ be an assignment of edge weights on $G$. Let us define the Hamiltonian $H_G(J)$ as
\begin{align}
H_G(J) = \sum_{\{u,v\}\in E}J_{\{u,v\}}Z_uZ_v\ .\label{eq:hgj}
\end{align}
It will be useful for later to also define
\begin{align}\label{eq:hgdefs}
H_G  = \sum_{\{u,v\}\in E}Z_uZ_v,\qquad\text{and}\qquad H^{\mathrm{MaxCut}}_G = \frac{1}{2}(mI-H_G),
\end{align}
where $H^{\mathrm{MaxCut}}_G$ is the Hamiltonian used in QAOA for the Maximum Cut problem on the graph $G$. We will use the following bound on the circuit depth of a QAOA unitary.
\begin{lemma}\label{lem:qaoablow-up}
Let $U=U_H(\beta,\gamma)$ with $\beta, \gamma \in \mathbb{R}^p$ be a level-$p$ QAOA unitary (cf.~Eq.~\eqref{eq:qaoaunitary}) for a Hamiltonian $H=H_G(J)$  on  a graph $G$ (cf.~\eqref{eq:hgj}). Let $D$ be the maximum degree of $G$. Then $U$ can be realized by a circuit
of depth $d\leq p(D+1)$ consisting of $2$-qubit gates. 

If $G$ is $D$-regular and bipartite, then the circuit depth of $U$ can be bounded by $d\leq pD$. 
\end{lemma}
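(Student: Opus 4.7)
The plan is to count two-qubit-gate layers in each of the $p$ QAOA rounds separately, so that the depth bound is obtained by summing a per-round bound of $D+1$ (or $D$ in the bipartite $D$-regular case). Within a single round $m$, the phase factor $e^{i\gamma_m H}=\prod_{\{u,v\}\in E}\exp(i\gamma_m J_{\{u,v\}} Z_u Z_v)$ factors as a product of pairwise commuting two-qubit Pauli rotations indexed by the edges of $G$, while the driver factor $e^{i\beta_m B}=\prod_{j=1}^n e^{i\beta_m X_j}$ is a tensor product of commuting single-qubit rotations.

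To parallelize the edge rotations, I would invoke Vizing's edge-coloring theorem, which decomposes $E$ into at most $D+1$ matchings $M_1,\ldots,M_{D+1}$ whenever $G$ has maximum degree $D$. Each matching consists of disjoint edges, so all two-qubit rotations on edges of a common color commute and act on disjoint qubit pairs; they can thus be applied simultaneously as one two-qubit-gate layer. This already realizes $e^{i\gamma_m H}$ in at most $D+1$ layers.

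The remaining task, and the one requiring a little care, is to absorb the driver $e^{i\beta_m B}$ into these $D+1$ layers without incurring an extra layer. Order the matching layers so that $M_{D+1}$ is the one adjacent to the driver. For each qubit $j$ let $c_j$ be the largest index such that $j$ is an endpoint of some edge of $M_{c_j}$; this is well-defined provided $j$ is not isolated in $G$, and isolated qubits can be dropped from the outset since they remain in the state $\ket{+}$ throughout the circuit and $e^{i\beta_m X_j}\ket{+}=e^{i\beta_m}\ket{+}$ contributes only a global phase. Because $e^{i\beta_m X_j}$ commutes with every two-qubit gate supported on qubits other than $j$, it slides past the layers $M_{D+1},\ldots,M_{c_j+1}$ and can be multiplied into the two-qubit gate of $M_{c_j}$ incident to $j$; the composite is still a two-qubit gate. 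Thus each QAOA round fits inside $D+1$ layers, and summing over $m=1,\ldots,p$ yields $d\le p(D+1)$.

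For the bipartite $D$-regular case, I would replace Vizing's theorem by König's line-coloring theorem, which decomposes the edge set into exactly $D$ perfect matchings. Every qubit is then covered by every matching, so the absorption of $e^{i\beta_m X_j}$ into the adjacent layer $M_D$ is immediate and uses no commutation argument at all; the per-round count drops to $D$, giving $d\le pD$. The only mildly delicate step in the whole argument is the driver absorption described above; everything else is a direct application of standard edge-coloring results to the commuting structure of the QAOA ansatz.
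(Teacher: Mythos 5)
Your proof is correct and follows essentially the same route as the paper's: Vizing's theorem yields the $D+1$ matchings for the general bound, and K\H{o}nig's line-coloring theorem yields $D$ perfect matchings in the bipartite $D$-regular case, with each color class realized as one layer of disjoint two-qubit rotations. The only difference is that you make explicit the absorption of the single-qubit driver rotations $e^{i\beta_m X_j}$ into the last two-qubit gate touching qubit $j$ (and the treatment of isolated vertices), a point the paper's proof leaves implicit.
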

\begin{proof}
By Vizing's theorem~\cite{vizing1964} there is an edge coloring of~$G$ with at most $D+1$ colors. Taking such a coloring~$E=E_1\cup\cdots\cup E_{D+1}$, we may apply each level $e^{i\beta B}e^{i\gamma H}$ of $U$ in depth~$D+1$ by applying $(\prod_{v\in V}e^{i\beta X_v})\prod_{c=1}^{D+1} V_c(\gamma)$, where each $V_c(\gamma)=\left(\prod_{\{u,v\}\in E_c}e^{i\gamma J_{\{u,v\}}Z_uZ_v}\right)$
is a depth-$1$-circuit of two-local gates.

If $G$ is $D$-regular and bipartite, we may reduce the chromatic number upper bound from $D+1$ to $D$ since all bipartite graphs are $D$-edge-colorable by K\H{o}nig's line coloring theorem.
\end{proof}

The expected QAOA approximation ratios of suitably related instances are identical:
\begin{lemma}\label{lem:covarianceqaoa}
Let $\cL\subset V$ be an arbitrary subset of vertices
and $\partial \calL$ be the set of edges that have exactly one endpoint in $\calL$.
Let $J=(J_e)_{e\in E}\in\mathbb{R}^{E}$ be arbitrary edge weights. Define $\tilde{J}=(\tilde{J}_e)_{e\in E}\in \mathbb{R}^E$ by 
\begin{align}
\tilde{J}_e&=
\begin{cases}
-J_e\qquad&\textrm{ if }e\in \partial \cL\\
J_e\qquad &\textrm{ otherwise}\ .
\end{cases}
\end{align}
Then expected QAOA approximation ratios satisfy\begin{align}
\qaoap{H_G(J)}{p}&=\qaoap{H_G(\tilde{J})}{p}.
\end{align}
\end{lemma}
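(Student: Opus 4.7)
The plan is to exhibit a unitary symmetry that relates the two QAOA optimization problems. Specifically, let $X_\cL=\prod_{v\in\cL}X_v$ be the product of Pauli-$X$ on the vertices in $\cL$. The key observation is that conjugation by $X_\cL$ flips the sign of a term $Z_uZ_v$ precisely when exactly one of $u,v$ lies in $\cL$, i.e., when $\{u,v\}\in\partial\cL$. Hence $X_\cL H_G(J) X_\cL = H_G(\tilde J)$ by the very definition of $\tilde J$.

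From there I would verify three compatibility properties so that this conjugation intertwines the two QAOA families. First, the driver commutes with the symmetry: $[X_\cL,B]=0$, since $B=\sum_j X_j$ is a sum of single-qubit $X$'s. Second, the initial state is invariant: $X_\cL\ket{+^n}=\ket{+^n}$. Third, for each factor in the product that defines $U_H(\beta,\gamma)$ in Eq.~\eqref{eq:qaoaunitary} one has $e^{i\gamma_m H_G(\tilde J)}=X_\cL e^{i\gamma_m H_G(J)} X_\cL$ and $e^{i\beta_m B}=X_\cL e^{i\beta_m B}X_\cL$. Using $X_\cL^2=I$ and telescoping, this yields
\begin{align}
U_{H_G(\tilde J)}(\beta,\gamma)\ket{+^n}=X_\cL\, U_{H_G(J)}(\beta,\gamma)\, X_\cL\ket{+^n}=X_\cL\, U_{H_G(J)}(\beta,\gamma)\ket{+^n}.
\end{align}

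Taking expectation values of $H_G(\tilde J)$ on the left-hand state and using $X_\cL H_G(\tilde J)X_\cL = H_G(J)$ immediately gives
\begin{align}
\bra{\Psi_{H_G(\tilde J)}(\beta,\gamma)}H_G(\tilde J)\ket{\Psi_{H_G(\tilde J)}(\beta,\gamma)} = \bra{\Psi_{H_G(J)}(\beta,\gamma)}H_G(J)\ket{\Psi_{H_G(J)}(\beta,\gamma)},
\end{align}
so the numerators in the definition of $\qaoap{\cdot}{p}$ match after maximizing over $(\beta,\gamma)\in\mathbb{R}^{2p}$. For the denominators, note that $X_\cL$ permutes the computational basis via $x\mapsto x\oplus\mathbf{1}_\cL$, a bijection on $\{0,1\}^n$, and $\bra{x}H_G(\tilde J)\ket{x}=\bra{X_\cL x}H_G(J)\ket{X_\cL x}$; therefore $\max_x\bra{x}H_G(\tilde J)\ket{x}=\max_x\bra{x}H_G(J)\ket{x}$. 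Combining both equalities gives $\qaoap{H_G(J)}{p}=\qaoap{H_G(\tilde J)}{p}$.

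There is no real obstacle here; the proof is essentially a one-line symmetry argument once the intertwiner $X_\cL$ is identified. The only thing to be careful about is that the symmetry must be compatible with \emph{every} ingredient of the QAOA optimization --- the initial state, the driver, the cost Hamiltonian evolution, and the classical maximum in the denominator --- and the three commutation facts above cover exactly these ingredients. Hence a clean write-up will simply define $X_\cL$, verify the three bullet points, and conclude.
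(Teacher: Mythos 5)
Your proposal is correct and follows essentially the same route as the paper's proof: both identify the intertwiner $\overline{X}[\cL]=\prod_{v\in\cL}X_v$, use $\overline{X}H_G(J)\overline{X}=H_G(\tilde J)$ together with $[\overline{X},B]=0$ and $\overline{X}\ket{+^n}=\ket{+^n}$ to relate the variational states, and match the classical maxima in the denominator by the induced bijection on basis states. No further comment is needed.
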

\begin{proof}
Let us write $H=H_G(J)$ and $\tilde{H}=H_G(\tilde{J})$ for brevity. Let $\overline{X}=\overline{X}[\cL]$ be a tensor product of Pauli-$X$ operators acting on every qubit in $\cL\subset V$. Then $\tilde{H}=\overline{X} H\overline{X}$, which implies that
\begin{align}
\max_{x\in\{0,1\}^n} \bra{x}H\ket{x}&=\max_{x\in \{0,1\}^n}\bra{x}\tilde{H}\ket{x}\ .\label{eq:optimumconditionxxtilde}
\end{align}

Let  $\beta, \gamma \in \mathbb{R}^p$ be arbitrary. Then we also have
\begin{align}
 \overline{X}
\ket{\Psi_{\tilde{H}}(\beta,\gamma)}&=\prod_{m=1}^p (\overline{X}e^{i\beta_m B}e^{i\gamma_m \tilde{H}}\overline{X}) \ket{+^n} = \prod_{m=1}^p (e^{i\beta_m B}e^{i\gamma_m H}) \ket{+^n} = |\Psi_H(\beta,\gamma)\rangle,
\end{align}
where identities in the middle follow since $|+^n\rangle$ is stabilized by $\overline{X}$, and since $[\overline{X},B]=0$. Therefore we have
\begin{align}
\bra{\Psi_{\tilde{H}}(\beta,\gamma)}\tilde{H}
\ket{\Psi_{\tilde{H}}(\beta,\gamma)}=
\bra{\Psi_{\tilde{H}}(\beta,\gamma)}\overline{X}  H \overline{X}
\ket{\Psi_{\tilde{H}}(\beta,\gamma)} = \bra{\Psi_{H}(\beta,\gamma)} H\ket{\Psi_{H}(\beta,\gamma)}.
\end{align}
Combined with~\eqref{eq:optimumconditionxxtilde}, this implies the claim.
\end{proof}

In particular, if $G=(V,E)$ is a bipartite graph, then Lemma~\ref{lem:covarianceqaoa} implies that
\begin{align}
\qaoap{H_G}{p} = \qaoap{-H_G}{p}\ 
\end{align}and 
\begin{align}
\qaoap{H_G^{\mathrm{MaxCut}}}{p} = \frac{1}{2}(1 + \qaoap{H_G}{p})\label{eq:hmaxcutrelationx}.
\end{align}

We now prove Corollary~1. It is a direct consequence of Theorem~1, which we restate here for convenience in the notation of this appendix:\setcounter{theorem}{0}
\begin{theorem}
Consider a family $\{G_n=([n],E_n)\}_{n\in\cI}$ of graphs with Cheeger constant lower bounded 
as $h(G_n)\geq h>0$ for all $n\in\cI$.  Then
\begin{align}
\bra{\varphi}U^\dagger H_{G_n} U\ket{\varphi}< |E_n|-\frac{hn}{3}
\end{align}
for any $\mathbb{Z}_2$-symmetric depth-$d$ circuit $U$ composed of two-qubit gates, any $\mathbb{Z}_2$-symmetric product state $\varphi$, and any $n> 48^2 8^d$, $n\in\cI$. 
\end{theorem}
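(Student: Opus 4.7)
The theorem is precisely Theorem~1 of the main text, rewritten in the appendix's conventions where $H_{G_n}=\sum_{\{u,v\}\in E_n}Z_uZ_v$ rather than the ferromagnetic Ising $H_n^{\mathrm{FIM}}=\tfrac{1}{2}\sum_{\{u,v\}\in E_n}(I-Z_uZ_v)$. Since $H_{G_n}=|E_n|\,I-2H_n^{\mathrm{FIM}}$, the lower bound $\langle U^\dagger H_n^{\mathrm{FIM}}U\rangle>(h/6)n$ obtained in the main text (with $\epsilon=h/6$) is equivalent to the upper bound $\langle U^\dagger H_{G_n}U\rangle<|E_n|-hn/3$ claimed here. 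My plan is therefore to rerun the main-text argument, being careful only about this change of sign and the resulting constants.

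Concretely, I would proceed by contradiction. Suppose $\langle\varphi|U^\dagger H_{G_n}U|\varphi\rangle\geq |E_n|-hn/3$ and let $p(x)=|\langle x|U|\varphi\rangle|^2$. Using the computational-basis identity $\langle x|H_{G_n}|x\rangle=|E_n|-2|\partial\,\mathsf{supp}(x)|$ together with the Cheeger inequality $|\partial\,\mathsf{supp}(x)|\geq h\cdot\min\{|x|,n-|x|\}$, the assumption translates into
\begin{equation}
\sum_x p(x)\,\min\{|x|,n-|x|\}\leq \tfrac{n}{6}.
\end{equation}
Markov's inequality then gives $p(S\cup S')\geq 1/2$, where $S=\{x:|x|\leq n/3\}$ and $S'=\{x:|x|\geq 2n/3\}$.

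Next I would invoke the two $\mathbb{Z}_2$ ingredients. Since $\varphi$ is symmetric and $U$ commutes with $X^{\otimes n}$, the state $U\varphi$ is symmetric as well, so $p(x)=p(\bar{x})$ and hence $p(S)=p(S')\geq 1/4$. Because $\mathrm{dist}(S,S')\geq n/3$, the Eldar--Harrow bound (the Fact cited in the main text) yields $n/3 \leq 4\sqrt{n}\cdot 2^{3d/2}/(1/4)=16\sqrt{n}\cdot 2^{3d/2}$, i.e.\ $\sqrt{n}\leq 48\cdot 2^{3d/2}$. This contradicts the hypothesis $n>48^2\cdot 8^d$ and completes the proof.

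There is no serious obstacle beyond careful bookkeeping: the only point to watch is the factor of $2$ coming from $H_{G_n}=|E_n|\,I-2H_n^{\mathrm{FIM}}$, which shifts the Cheeger-based energy gap from $\tfrac{hn}{6}$ in the FIM normalization to $\tfrac{hn}{3}$ here, and ensures the ``low-energy'' set in $H_n^{\mathrm{FIM}}$ coincides with the ``near-maximum'' set $S\cup S'$ in $H_{G_n}$. Once this is verified, the remainder is a verbatim reuse of the main-text argument.
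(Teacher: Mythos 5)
Your proposal is correct and follows essentially the same route as the paper: the appendix statement is obtained from the main-text Theorem~1 via the affine relation $H_{G_n}=|E_n|I-2H_n^{\mathrm{MaxCut}}$, and your rerun of the main-text argument (Cheeger bound, Markov's inequality, $p(x)=p(\bar x)$ from the $\mathbb{Z}_2$ symmetry, and the Eldar--Harrow distance bound with $p(S)\geq 1/4$ giving $\sqrt{n}\leq 48\cdot 2^{3d/2}$) reproduces the paper's proof with the correct constants.
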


Then we have the following:\setcounter{theorem}{0}
\begin{corollary}
For every integer $D\ge 3$ there exists an
infinite family  of bipartite $D$-regular graphs $\{G_n\}_{n\in \cI}$
such that
\begin{align}
%\label{eq:mainclaimqaoaappendix}
\qaoap{H_{G_n}^{\mathrm{MaxCut}}}{p}\le \frac{5}{6} + \frac{\sqrt{D-1}}{3D}
\end{align}
as long as 
\be
p<(1/3\, \log_2n - 4)D^{-1}\ .\label{eq:pconditionlowerbound}
\ee
\label{corol:QAOA1_app}
\end{corollary}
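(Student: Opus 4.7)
The plan is to pick the family $\{G_n\}$ to be bipartite $D$-regular Ramanujan graphs and then let the restated Theorem~1 do the work, combined with the bipartite-symmetry of the QAOA objective (Lemma~\ref{lem:covarianceqaoa}) and the depth bound (Lemma~\ref{lem:qaoablow-up}). The reason Ramanujan graphs are the right choice is that Cheeger's inequality converts their spectral bound $|\lambda_2|\le 2\sqrt{D-1}$ into exactly the quantitative Cheeger estimate needed to land on $\tfrac{5}{6}+\tfrac{\sqrt{D-1}}{3D}$.

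First, I would fix a family $\{G_n\}_{n\in\cI}$ of bipartite $D$-regular Ramanujan graphs; these exist for every $D\ge 3$. Cheeger's inequality for $D$-regular graphs then gives $h(G_n)\ge h$ for all $n$, where
\[
h:=\frac{D-2\sqrt{D-1}}{2}=\frac{D}{2}-\sqrt{D-1}.
\]
Next I would verify the hypotheses of the restated Theorem~1 for the level-$p$ QAOA ansatz $U=U_{H_{G_n}}(\beta,\gamma)$: the state $|+^n\rangle$ is a $\ZZ_2$-symmetric product state; the unitary $U$ commutes with $X^{\otimes n}$ because both $H_{G_n}$ and $B$ do; and by Lemma~\ref{lem:qaoablow-up} it has depth $d\le pD$ as a two-qubit-gate circuit. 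Since $\tfrac13\log_2(48^2)<4$, the hypothesis $p<(\tfrac13\log_2 n-4)D^{-1}$ of the corollary implies $n>48^2\cdot 8^{d}$, so the restated Theorem~1 applies and yields
\[
\bra{+^n}U^{\dagger}H_{G_n}U\ket{+^n}<|E_n|-\frac{h n}{3}
\]
for every choice of angles $(\beta,\gamma)$.

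To finish, I would invoke Lemma~\ref{lem:covarianceqaoa} with $\cL$ equal to a colour-class of the bipartition, so that $\partial\cL=E_n$ and the sign-flipped weights satisfy $\tilde J=-J$. This gives the symmetry $\qaoap{H_{G_n}}{p}=\qaoap{-H_{G_n}}{p}$ and in turn the identity $\qaoap{H^{\mathrm{MaxCut}}_{G_n}}{p}=\tfrac12(1+\qaoap{H_{G_n}}{p})$ of~\eqref{eq:hmaxcutrelationx}. Dividing the Theorem~1 bound by $|E_n|=nD/2$ gives $\qaoap{H_{G_n}}{p}<1-2h/(3D)$, whence
\[
\qaoap{H^{\mathrm{MaxCut}}_{G_n}}{p}<1-\frac{h}{3D}=\frac{5}{6}+\frac{\sqrt{D-1}}{3D}.
\]
The only slightly delicate ingredient is the existence of bipartite $D$-regular Ramanujan families for \emph{every} $D\ge 3$ (rather than only the prime-power degrees covered by Morgenstern's explicit construction); this can be cited from Marcus--Spielman--Srivastava.
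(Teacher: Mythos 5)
Your proposal is correct and follows essentially the same route as the paper's proof: bipartite $D$-regular Ramanujan graphs from Marcus--Spielman--Srivastava, the depth bound $d\le pD$ from Lemma~\ref{lem:qaoablow-up}, the restated Theorem~1 applied under the condition $n>48^2\cdot 8^{pD}$, the bipartite symmetry~\eqref{eq:hmaxcutrelationx} via Lemma~\ref{lem:covarianceqaoa}, and the Cheeger/Ramanujan bound $h\ge (D-2\sqrt{D-1})/2$ to arrive at $\tfrac{5}{6}+\tfrac{\sqrt{D-1}}{3D}$. All the arithmetic (including the check that $\tfrac13\log_2(48^2)<4$) is right, so there is nothing to add.
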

\begin{proof}
Fix some $D\ge 3$. By the results of~\cite{marcus2015a,marcus2015b}, there exists 
an infinite family $\{G_n\}_{n\in\mathcal{I}}$ of bipartite $D$-regular Ramanujan graph with~$n$ vertices for every $n\in\mathcal{I}$. Consider a fixed $n\in\mathcal{I}$ and let $p=p(n)$ be the associated QAOA level. 
Let $U_n=U_{H_{G_n}}(\beta^*,\gamma^*)$ be a level-$p$ QAOA unitary for the Hamiltonian~$H_{G_n}$ on $G_n$,
and assume that $\beta^*,\gamma^*\in \mathbb{R}^p$  are such that the expectation of~$H_{G_n}$ is maximized. Because~$G_n$ is $D$-regular, 
the circuit depth of  $U_n$ can be bounded from above by $pD$ according to Lemma~\ref{lem:qaoablow-up}. Condition~\eqref{eq:pconditionlowerbound} implies that 
$n>48^2 8^{pD}$, thus
\begin{align}
\qaoap{H_{G_n}}{p}&=\frac{1}{|E_n|} \bra{+^n} U_n^\dagger H_{G_n} U_n\ket{+^n}<1-\frac{h}{3|E_n|}n =1-\frac{2h}{3D}\ 
\end{align}
by Theorem~1, where we have used that $|E_n|=n D/2$.
With~\eqref{eq:hmaxcutrelationx} (using that $G_n$ is bipartite)
we conclude that
\begin{align}
\qaoap{H_{G_n}^{\mathrm{MaxCut}}}{p}&<1-\frac{h}{3D} \ .
\end{align}
The claim then follows from the bound
$h/D \ge (D-2\sqrt{D-1})/(2D)$, valid for all Ramanujan graphs. 
\end{proof}

\section{Optimal variational circuit for the ring of disagrees}
\label{app:tight}

In this section we prove that the upper bound of Theorem~2 in the main text
is tight whenever $n$ is a multiple of $2R+1$.
Let 
\[
|\ghz{n}\ra = 2^{-1/2}(|0^n\ra + |1^n\ra)
\]
be the GHZ state of $n$ qubits.
\begin{lemma}
\label{lemma:ghz1}
Suppose $n=2p+1$ for some integer $p$. There exists a $\ZZ_2$-symmetric  range-$p$ 
quantum circuit $V$ such that 
\be
\label{eq:GHZoptimal}
|\ghz{n}\ra = V|+^n\ra.
\ee
\end{lemma}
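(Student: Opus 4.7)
\textit{Proof plan.} Since $n = 2p+1$, the interval $[j-p,j+p]$ on the cycle $\ZZ_n$ contains all $n$ vertices, so the range-$p$ condition is vacuous: every unitary on $n$ qubits trivially satisfies it. It therefore suffices to construct a $\ZZ_2$-symmetric unitary $V$ with $V|+^n\ra = |\ghz{n}\ra$; any such $V$ can then be realized as a two-qubit-gate circuit by standard decomposition.

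Both $|+^n\ra$ and $|\ghz{n}\ra$ are stabilizer states, so I build $V$ as a Clifford by specifying its conjugation action on the Pauli generators. Since $|+^n\ra$ is stabilized by $X_1,\ldots,X_n$ and $|\ghz{n}\ra$ by $\la X^{\otimes n}, Z_1Z_2,\ldots,Z_{n-1}Z_n\ra$, I choose commuting Paulis $s_j := V X_j V^\dagger$ that generate the GHZ stabilizer. The $\ZZ_2$-symmetry of $V$ is equivalent to the additional constraint $\prod_j s_j = X^{\otimes n}$, since $V X^{\otimes n} V^\dagger = \prod_j V X_j V^\dagger$. Setting
\[
s_1 := X^{\otimes n} Z_1 Z_n, \qquad s_j := Z_{j-1}Z_j \text{ for } j = 2,\ldots,n,
\]
one checks that the $s_j$ pairwise commute, generate the GHZ stabilizer, and telescope to $\prod_j s_j = X^{\otimes n}(Z_1 Z_n)^2 = X^{\otimes n}$. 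The images $t_j := V Z_j V^\dagger$ of the $Z_j$ are then extended to preserve the canonical (anti)commutation relations with the $s_j$'s---always possible by the symplectic structure of the Pauli group---and $V$ is produced by standard Clifford circuit synthesis.

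By construction $V|+^n\ra$ is stabilized by $\{s_j\}$, hence equals $|\ghz{n}\ra$ up to a global phase, and $V X^{\otimes n} V^\dagger = \prod_j s_j = X^{\otimes n}$ gives the $\ZZ_2$-symmetry. The main obstacle is finding stabilizer generators whose product is \emph{exactly} $X^{\otimes n}$: the naïve choice $s_1 = X^{\otimes n}, s_j = Z_{j-1}Z_j$ fails because $\prod_{j\ge 2} Z_{j-1}Z_j$ telescopes to $Z_1 Z_n$, giving $\prod_j s_j = X^{\otimes n} Z_1 Z_n \neq X^{\otimes n}$; absorbing this stray $Z_1 Z_n$ into $s_1$, which still stabilizes $|\ghz{n}\ra$, resolves the symmetry constraint and makes the remaining commutation checks and Clifford synthesis routine.
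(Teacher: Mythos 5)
Your stabilizer bookkeeping is fine: the generators $s_1=X^{\otimes n}Z_1Z_n$, $s_j=Z_{j-1}Z_j$ pairwise commute, generate the $\ghz{n}$ stabilizer, and multiply to $X^{\otimes n}$, so the resulting Clifford is $\ZZ_2$-symmetric and maps $\ket{+^n}$ to $\ghz{n}$ up to phase. The gap is your opening claim that the range-$p$ condition is vacuous for $n=2p+1$. It is not, and the paper's proof spends most of its effort verifying it. The range condition requires $V^\dagger Z_q V$ to be supported within distance $p$ of $q$ \emph{as measured in the ambient geometry in which $V$ is later used}: in Lemma~\ref{lemma:ghz2} one takes $U_2=V^{\otimes m}$ on a cycle of length $m(2p+1)$ with each copy of $V$ on a consecutive segment, and the inference ``each $V$ has range $p$, hence $U_2$ has range $p$'' needs $V^\dagger Z_q V$ to be supported in $[q-p,q+p]\cap[1,2p+1]$ as a sub-interval of the segment. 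For the endpoint qubit $q=1$ this interval is $[1,p+1]$, i.e.\ only the left half of the block --- far from vacuous. An arbitrary Clifford realizing your stabilizer map (obtained by symplectic completion of the $t_j$'s and generic synthesis) will typically send $Z_1$ to something supported near qubit $2p+1$, which sits at distance $2p>p$ from qubit $1$ in the long cycle; such a $V$ would break Lemma~\ref{lemma:ghz2} and hence the tightness claim of Theorem~2.

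This is exactly why the paper does not stop at ``some Clifford exists'' but writes down the specific circuit $V=\bigl(\prod_{j\neq c}e^{-i(\pi/4)X_j}\bigr)\bigl(\prod_{j=1}^{p}\rz_{c,c-j}\rz_{c,c+j}\bigr)$ with all entangling gates sharing the central qubit $c=p+1$: for $q<c$ every gate $\rz_{c,c+j}$ commutes past $O_q$ and cancels, so $V^\dagger O_q V$ is supported in $[1,c]\subseteq[q-p,q+p]$, with the symmetric argument for $q>c$ and only the central qubit handled trivially. To repair your proof you would need to either exhibit images $t_j=VZ_jV^\dagger$ (equivalently, preimages $V^\dagger Z_jV$) with the required localized supports and a synthesis respecting them, or simply adopt an explicit gate-level construction of this centered type; the existence-by-symplectic-completion argument alone does not deliver the locality that the lemma is actually for.
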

\begin{proof}
We shall write $\cx_{c,t}$ for the CNOT gate with a control qubit $c$
and a target qubit $t$.
Let $c=p+1$ be the central qubit.
One can easily check that 
\[
|\ghz{n}\ra = \left( \prod_{j=1}^p \cx_{c,c-j} \cx_{c,c+j} \right) H_c |0^n\ra.
\]
All $\cx$ gates in the product pairwise commute, so the order does not matter.
Inserting a pair of Hadamards on every qubit $j\in [n]\setminus \{c\}$
before and after the respective $\cx$ gate
and using the identity $(I\otimes H)\cx(I\otimes H)=\cz$ one gets 
\be
\label{eq:ghz2}
|\ghz{n}\ra = \left( \prod_{j\in [n]\setminus \{c\}} H_j \right) \left( \prod_{j=1}^p \cz_{c,c-j} \cz_{c,c+j} \right)  |+^n\ra.
\ee
Let $S=\exp{[i(\pi/4)Z]}$ be the phase-shift gate. 
Define the two-qubit Clifford gate
\[
\rz=(S\otimes S)^{-1} \cz = \exp(-i \pi/4)\exp{[-i(\pi/4) (Z\otimes Z)]}.
\]
Expressing $\cz$ in terms of $\rz$ and $S$ in Eq.~\eqref{eq:ghz2} one gets
\be
\label{eq:ghz3}
|\ghz{n}\ra = S_c^{2p} \left( \prod_{j\in [n]\setminus \{c\}} H_jS_j \right) \left( \prod_{j=1}^p \rz_{c,c-j} \rz_{c,c+j} \right)  |+^n\ra.
\ee
Multiply both sides of Eq.~\eqref{eq:ghz3} on the left by a product of $S$ gates over
qubits $j\in [n]\setminus \{c\}$. Noting that 
\[
SHS = i \exp{[-i(\pi/4)X]}
\]
one gets (ignoring an overall phase factor)
\be
\label{eq:ghz4}
\prod_{j\in [n]\setminus \{c\}} S_j |\ghz{n}\ra =S_c^{2p}
\left( \prod_{j\in [n]\setminus \{c\}} \exp{[-i(\pi/4)X_j]} \right) 
\left( \prod_{j=1}^p \rz_{c,c-j} \rz_{c,c+j} \right)  |+^n\ra.
\ee
Using the identity
\[
\prod_{j\in [n]\setminus \{c\}} S_j |\ghz{n}\ra=
S_c^{2p} |\ghz{n}\ra.
\] 
one can cancel $S_c^{2p}$ that appears in both sides  of Eq.~\eqref{eq:ghz4}. We arrive at
Eq.~\eqref{eq:GHZoptimal} with
\be
\label{eq:ghz5}
V = \left( \prod_{j\in [n]\setminus \{c\}} \exp{[-i(\pi/4)X_j]} \right) 
\left( \prod_{j=1}^p \rz_{c,c-j} \rz_{c,c+j} \right) 
\ee
Obviously, $V$ is $\ZZ_2$-symmetric since any individual gate commutes with $X^{\otimes n}$.
Let us check that $V$ has range-$p$. Consider any single-qubit observable $O_q$ 
acting on the $q$-th qubit. 
Consider three cases. {\em Case~1:} $q=c$. Then $V^\dag O_q V$ may be supported on all $n$ qubits.
However, $[c-p,c+p]=[1,n]$, so the $p$-range condition is satisfied trivially. 
{\em Case~2:} $1\le q<c$. Then all gates $\rz_{c,c+j}$ in $V$  cancel the corresponding gates in $V^\dag$,
so that $V^\dag O_q V$ has support in the interval $[1,c]\subseteq [q-p,q+p]$.
Thus the $p$-range condition is satisfied.
{\em Case~3:} $c<q\le n$. This case is equivalent to Case~2 by symmetry. 
\end{proof}
Recall that we consider the ring of disagrees Hamiltonian
\[
H_n = \frac12 \sum_{p\in \ZZ_n} (I-Z_p Z_{p+1}).
\]
\begin{lemma}
\label{lemma:ghz2}
Consider any integers $n,p$ such that $n$ is even and $n$ is a multiple of $2p+1$.
Then there exists a $\ZZ_2$-symmetric range-$p$ circuit $U$ such that 
\be
\la +^n|U^\dag H_n U|+^n\ra = \frac{2p+1/2}{2p+1}.
\ee
\end{lemma}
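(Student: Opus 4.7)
The plan is to exhibit $U$ in the factorized form $U = \overline{X}V$, where $V$ is a $\ZZ_2$-symmetric range-$p$ circuit that prepares a tensor product of GHZ states on disjoint arcs of the cycle, and $\overline{X} = (XI)^{\otimes n/2}$ is the same Pauli used in the proof of Theorem~2 to exchange minimum and maximum energy of $H_n$ via $\overline{X}H_n\overline{X} + H_n = nI$. First I would partition the cycle $\ZZ_n$ into $m := n/(2p+1)$ consecutive arcs $B_1,\ldots,B_m$ of size $2p+1$ each and take $V = V_1 \otimes \cdots \otimes V_m$, where $V_k$ is a translated copy of the Lemma~B.1 circuit acting on block $B_k$. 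Then $V\ket{+^n} = \bigotimes_{k=1}^{m}\ket{\ghz{2p+1}}_{B_k}$ by Lemma~B.1, and the $\ZZ_2$-symmetry of $V$ is immediate since each factor $V_k$ commutes with $X^{\otimes(2p+1)}$ on its block.

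Verifying the range-$p$ property for $V$ is the main subtlety: a naive bound using the block size would only give range $2p$. However, the proof of Lemma~B.1 establishes a sharper, one-sided support property: for a qubit $q$ in block $B_k$, $V_k^\dagger Z_q V_k$ is supported either in the left half of $B_k$ (from the left endpoint to the block center) or in the right half, depending on which side of the center $q$ lies. In both cases this support is a cyclic sub-interval of $[q-p,q+p]$, giving the range-$p$ property on the cycle. I would then compute $\bra{+^n}V^\dagger H_n V\ket{+^n}$ edge by edge: intra-block edges $(i,i+1)$ contribute $\tfrac12(1 - \bra{\ghz{2p+1}}Z_iZ_{i+1}\ket{\ghz{2p+1}}) = 0$, since both $\ket{0^{2p+1}}$ and $\ket{1^{2p+1}}$ are $+1$-eigenvectors of $Z_iZ_{i+1}$, while inter-block edges contribute $\tfrac12$ by the product structure across blocks together with $\bra{\ghz{2p+1}}Z\ket{\ghz{2p+1}} = 0$. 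Since there are exactly $m = n/(2p+1)$ inter-block edges, the total is $\bra{+^n}V^\dagger H_n V\ket{+^n} = n/(2(2p+1))$.

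Finally, setting $U = \overline{X}V$ and applying $\overline{X}H_n\overline{X} = nI - H_n$ gives $\bra{+^n}U^\dagger H_n U\ket{+^n} = n - n/(2(2p+1)) = n(4p+1)/(2(2p+1))$, which after $1/n$ normalization equals $(2p+1/2)/(2p+1)$ and saturates the upper bound of Theorem~2. The circuit $U$ remains $\ZZ_2$-symmetric since $\overline{X}$ commutes with $X^{\otimes n}$, and remains range-$p$ since $\overline{X}Z_j\overline{X} = \pm Z_j$ preserves single-qubit supports, so $U^\dagger Z_j U = \pm V^\dagger Z_j V$. The main obstacle throughout is the range verification for the block-product construction; the rest reduces to elementary Pauli expectations on $\ket{\ghz{2p+1}}$.
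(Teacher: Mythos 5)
Your proposal is correct and follows essentially the same route as the paper: the paper's circuit is exactly $U=\overline{X}\,V^{\otimes m}$ with $V$ the single-block GHZ preparator of Lemma~\ref{lemma:ghz1}, and the energy is evaluated the same way (intra-block edges saturate, the $m$ boundary edges each contribute $1/2$ after the $\overline{X}$ conjugation). Your remark that the range-$p$ property needs the one-sided support structure from Lemma~\ref{lemma:ghz1} rather than a crude block-size bound is exactly the point the paper's Lemma~\ref{lemma:ghz1} proof establishes via its case analysis, so nothing is missing.
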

\begin{proof}
Let $V$ be the $\ZZ_2$-symmetric range-$p$ unitary operator preparing
the GHZ state on $2p+1$ qubits starting from $|+^{2p+1}\ra$, see Lemma~\ref{lemma:ghz1}.
Suppose $n=m(2p+1)$ for some integer $m$. 
Define
\[
U=U_1 U_2,  
\]
where
\[
U_1 = (X\otimes I)^{\otimes n/2} \quad \mbox{and} \quad U_2 = V^{\otimes m}.
\]
Since each copy of $V$ acts on a consecutive interval of  qubits and has range $p$, one infers that
$U$ has range $p$. 
We have 
\[
U_1^\dag H_n U_1 =  \sum_{p\in \ZZ_n} G_p,
\quad G_p =  \frac12 (I+Z_p Z_{p+1}).
\]
The state $U_2|+^n\ra$ is a tensor product of GHZ states supported on consecutive
tuples of $2p+1$ qubits. The expected value of $G_p$ on the state
$U_2|+^n\ra$  equals $1$ if $G_p$ is supported on one of the GHZ state.
Otherwise, if $G_p$ crosses the boundary between two GHZ states, 
the expected value of $G_p$ on the state $U_2|+^n\ra$  equals $1/2$.
Thus 
\[
\la +^n |U^\dag H_n U|+^n\ra = \sum_{p\in \ZZ_n} \la +^n|U_2^\dag G_p U_2|+^n\ra
=m (2p +1/2)  = n\frac{2p+1/2}{2p+1}.
\] 
\end{proof}

\section{Recursive QAOA}
\label{app:RQAOA}

In this appendix, we outline the Recursive QAOA algorithm (RQAOA) for general cost functions.

\subsection{Variable Elimination}\label{sec:variableelimination}
Let $G=(V,E)$ be a hypergraph  with $|V|=n$~vertices.
Suppose a variable $x_v\in \{1,-1\}$ is associated with each vertex $v\in V$.
Let $\{1,-1\}^V=\{1,-1\}^n$ be the set of all possible variable assignments.
Let $J:E\rightarrow\mathbb{R}$ be a function which assigns a real weight 
$J_e$ to every hyperedge~$e$. 
Given a subset  $f\subset V$ and an assignment
$x\in \{1,-1\}^V$,   let us write
\[
x(f)=\prod_{v\in f} x_v.
\] 
Let us agree that $x(\emptyset)=1$.
We consider the problem of maximizing cost functions of the form
\begin{align}
C(x)&=\sum_{e\in E}J_e x(e)\ 
\end{align}
over $x\in \{1,-1\}^V$. 

Fix some vertex $v\in V$. 
As a motivation,  we first describe how a single variable $x_v$
can be eliminated when a suitably constrained problem is considered. 
Namely, suppose that  instead of trying to approximate $\max_{x\in \{1,-1\}^V} C(x)$, we restrict to 
$x\in \{1,-1\}^V$ satisfying 
\begin{align}
x(f)&=\sigma,\label{eq:zeliminationconstraint}
\end{align}
where  $f\subset V$ is some fixed subset of vertices containing~$v$, and $\sigma\in \{1,-1\}$ is a constant.
If $x\in \{1,-1\}^V$ satisfies the constraint~\eqref{eq:zeliminationconstraint}  then
\[
J_e x(e)=J_e x(e) x(f) \sigma = \sigma J_e x(e  \bigtriangleup f).
\]
Here and below $A\bigtriangleup B$ denotes the symmetric difference of sets $A$ and $B$.
We arrive at
\begin{align}
C(x)&=\sum_{\substack{e\in E\\
v\not\in e}}J_e x(e)+\sum_{\substack{e\in E:\\
v\in e}} \sigma J_e x\left(e \bigtriangleup f\right)\ . \label{eq:czmodified}
\end{align}
Note that $C(x)$ does not depend on $x_v$. Expression~\eqref{eq:czmodified} can be written as a sum over the hyperedges of a  hypergraph~$G'=(V',E')$ with vertex set $V'=V\backslash \{v\}$ and hyperedges 
\be
\label{E'1}
E'=E'_0 \cup E'_1,
\ee
where
\be
\label{E'2}
E'_0=\{ e\in E\, : \, v\notin e\} \quad \mbox{and} \quad E'_1 = \{ e \bigtriangleup f\, : \,  e\in E, \quad v\in e\}.
\ee
Note that $G'$ no longer contains the vertex~$v$. Define a function $J':E'\rightarrow\mathbb{R}$ such that
\be
\label{J'1}
J'_e = J_e \quad \mbox{if $e\in E'_0$},
\ee
and
\be
\label{J'2}
J'_e =  \sigma J_{e  \bigtriangleup f}  \quad \mbox{if $e\in E'_1$}.
\ee
By construction, 
the maximum of
\begin{align}
C'(x)&=\sum_{e\in E'}J_e' x(e)\label{eq:derivedproblem}
\end{align}
over all assignments $x\in \{1,-1\}^{V'}$
coincides with the maximum of
$C(x)$ over all $x\in \{1,-1\}^V$ satisfying the constraint Eq.~\eqref{eq:zeliminationconstraint}. Furthermore,  
any maximum~$x^*$ of~$C'$ can directly be translated to a corresponding maximum of~$C$  
 over the restricted set defined by the constraint~\eqref{eq:zeliminationconstraint} by setting $x^*_v=\sigma\cdot x(f\backslash \{v\})$. That is, we have  
 $x^*=\xi(x)$ for the function~$\xi:\{1,-1\}^{V'}\rightarrow\{1,-1\}^V$ defined by 
 \begin{align}
 \xi(x)_w&=\begin{cases}
 \sigma\cdot x(f\backslash \{v\})\qquad&\textrm{ for }v=w\\
 x_w\qquad &\textrm{ otherwise }  
 \end{cases} \label{eq:xidef}
 \end{align}
 for all $w\in V$. 
 
 In summary, we have  reduced the problem of maximizing $C(x)$ over $n$ variables 
 $x\in \{1,-1\}^V$ satisfying~\eqref{eq:zeliminationconstraint} to the problem of maximizing $C'(x)$ over
 $n-1$ variables  $x\in \{1,-1\}^{V'}$. If a global  maximum~$x$ of $C(x)$ happens to satisfy~\eqref{eq:zeliminationconstraint}, the new reduced problem yields a solution to the original problem.

\subsection{Correlation rounding\label{sec:corrrounding}}
To construct an approximation algorithm, we simply {\em impose} a  constraint of the form~\eqref{eq:zeliminationconstraint} by choosing $f\subset V$, $v\in f$ and $\sigma\in \{1,-1\}$ appropriately. To make the latter choice, we use the standard $\QAOA_p$ algorithm with $p=O(1)$. That is, let us set
\begin{align}
H_G(J)=\sum_{e\in E} J_e Z(e)\qquad\textrm{ where }\qquad Z(e)=\prod_{v\in e}Z_v\ 
\label{H_G(J)}
\end{align}
and  write $H=H_G(J)$. We first use the standard $\QAOA_p(H)$ algorithm to find an optimal state 
\[
\Psi=\Psi_{H_G}(\beta_*,\gamma_*)\in (\mathbb{C}^2)^{\otimes |V|}
\] 
maximing the energy of~$H_G$. 
The expected value 
\[
M_e = \bra{\Psi}Z(e)\ket{\Psi}
\]
can be efficiently approximated on a quantum computer for any $e\in E$.

Suppose the state~$\Psi$ is measured in the computational basis giving a string $x\in \{1,-1\}^V$. Clearly, if $|M_f|$ is close to~$1$, then the variables $\{x_v\}_{v\in f}$  satisfy a constraint of the form~\eqref{eq:zeliminationconstraint}  with high probability with $\sigma=\mathsf{sign}(M_f)\in \{1,-1\}$ and any $v\in f$. Thus it is natural to choose $f$ such that $|M_f|$ is maximal.  Combined  with the procedure for eliminating the corresponding variable~$x_v$ described in Section~\ref{sec:variableelimination}, we obtain a subroutine
for reducing the  problem size by one variable. Pseudocode for this routine is given  below.

Imposing a constraint of the form~\eqref{eq:zeliminationconstraint} 
 can be viewed as rounding correlations
among the variables $\{x_w\}_{w\in f}$: indeed, the constraint demands that for $v\in f$, the variable $x_v$ and $x(f\backslash \{v\})$  must be perfectly correlated
or anti-correlated.

  \begin{figure}[ht!]
  \begin{algorithmic}[1]
  \Statex
  \Function{eliminateVariable}{$G=(V,E),J$}\\ \hrulefill\\
  \textbf{Input:} A hypergraph $G=(V,E)$ and a weight function $J:E\rightarrow\mathbb{R}$\\
  \textbf{Output:} A hypergraph $G'=(V',E')$, 
   $J':E'\rightarrow\mathbb{R}$ and   a function  $\xi:\{1,-1\}^{V'}\rightarrow \{1,-1\}^V$.     \\ \hrulefill
   \State{Run  $\QAOA_p(H_G(J))$ to find a state~$\Psi$ which maximizes $\bra{\Psi}H_G(J)\ket{\Psi}$.}
   \State{Compute $M_e=\bra{\Psi}Z(e)\ket{\Psi}$ for every $e\in E$.}
   \State{Set $f=\argmax_{f\in E} |M_f|$ (breaking ties arbitrarily).}
     \State{Pick $v\in f$ arbitrarily.}
    \State{Set $\sigma=\mathsf{sign}(M_{f})$.}
  \State{Define $V'=V\backslash\{v\}$. Also define $E'$, $J'$ 
  and $\xi$  by Eqs.~\eqref{E'1},\eqref{E'2},\eqref{J'1},\eqref{J'2},\eqref{eq:xidef}.}
     \State{\Return $(G'=(V',E'), 
     J',\xi)$.}
   \EndFunction
 \end{algorithmic}
\end{figure}

\subsection{The recursive QAOA (\texorpdfstring{$\RQAOA$}{RQAOA}) algorithm\label{sec:recursiveQAOAexplained}}
The recursive QAOA algorithm ($\RQAOA$) we propose here proceeds simply by
iterating the process of eliminating one variable at a time until the number of variables reaches some specified threshold value~$n_c\ll n$. The remaining instance of the problem with~$n_c$ variables is solved by a purely classical algorithm (for example, by the brute force method). Thus the value of~$n_c$ controls  how the workload is distributed between the quantum and the classical computers. Pseudocode for the $\RQAOA$ algorithm is given in Fig.~\ref{fig:pseudocodeRQAOA}.  

  \begin{figure}[ht]
  \begin{algorithmic}[1]
  \Statex
  \Function{$\RQAOA$}{$G=(V,E),J$}\\ \hrulefill\\
  \textbf{Input:} A hypergraph $G=(V,E)$ with $n=|V|$ and a weight function $J:E\rightarrow\mathbb{R}$ defining a 
  Hamiltonian $H_G(J)$, see Eq.~\eqref{H_G(J)}.\\
  \textbf{Output:} A variable assignment  $x\in \{-1,1\}^V$\\ \hrulefill
   \State{Let $\xi^{(0)} \, : \, \{1,-1\}^V\rightarrow\{1,-1\}^V$ be the identity map.}
            \For{$k=1$ to $n-n_c$}
            \State{$(G, 
            J,\xi)\gets \mathsf{eliminateVARIABLE}(G,J)$.}
          \State{$\xi^{(k)}\gets \xi^{(k-1)}\circ \xi$.}
                   \EndFor
                   \State{Let $G=(V,E)$ be the final hypergraph with $|V|=n_c$ vertices.}
                     \State{Find $x^*=\argmax_{x\in \{1,-1\}^{V}}
                     \bra{x}H_{G}(J)\ket{x}$.}
                     
       \State{\Return $\xi(x^*)$}
   \EndFunction
 \end{algorithmic}
\caption{Pseudocode for the recursive QAOA algorithm. \label{fig:pseudocodeRQAOA}}
\end{figure}

\subsection{Classical simulability of level-$1$ \texorpdfstring{$\RQAOA$}{RQAOA} for Ising models\label{sec:classicalsimulability}}

Suppose $J$ is a real symmetric matrix of size $n$.
Here we consider Ising-like cost functions such that the corresponding Hamiltonian is 
\begin{align}
H&=\sum_{1\leq p<q\leq n} J_{p,q} Z_pZ_q\ .\label{eq:isingcostfct}
\end{align}
The mean values of a Pauli operator $Z_p Z_q$ on the level-$1$ QAOA
state  
\[
|\Psi_H(\beta,\gamma)\ra =e^{i\beta B} e^{i\gamma H} |+^n\ra
\]
can be computed in time $O(n)$ using an an explicit analytic formula. 
Such a formula was derived for the Max-Cut cost function by Wang et al.~\cite[Theorem 1]{zhihui2018}. 
Below we provide a generalization to general Ising
Hamiltonians.
Since the total number of terms in the cost function is~$O(n^2)$, simulating each step of $\RQAOA$
takes time at most $O(n^3)$. Assuming that $n_c = O(1)$, the number of steps is roughly $n$ so
that the full simulation cost is $O(n^4)$. Crucially, the simulation cost of this method does
not depend on the depth of the variational circuit. This is important because $\RQAOA$ may
potentially increase the depth from $O(1)$ to $O(n)$ since it adds many new terms to the cost
function.

\begin{lemma}
\label{lemma:QAOAlevel1}
Fix a pair of qubits $1\le u<v\le n$. 
Let $c=\cos{(2\beta)}$ and $s= \sin{(2\beta)}$. Then
\begin{eqnarray}
 \la \Psi_H(\beta,1)| Z_u Z_v |\Psi_H(\beta,1)\ra
&=& (s^2/2) \prod_{p\ne u,v} \cos{[2J_{u,p}   - 2J_{v,p} ]}
- (s^2/2)  \prod_{p\ne u,v} \cos{[2J_{u,p}   + 2J_{v,p} ]} \nonumber \\
&& + cs \cdot \sin{(2J_{u,v})}\left[ 
\prod_{p\ne u,v} \cos{(2J_{u,p})} + \prod_{p\ne u,v} \cos{(2J_{v,p})}\right].
\label{ZZmain}
\end{eqnarray}
\end{lemma}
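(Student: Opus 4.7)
The plan is to evaluate the expectation in the Heisenberg picture, conjugating $Z_uZ_v$ first through $e^{i\beta B}$ and then through $e^{iH}$, and finally evaluating the resulting operator on the product state $|+^n\ra$. Using the single-qubit identity $e^{-i\beta X}Z e^{i\beta X} = \cos(2\beta)Z - \sin(2\beta)Y$, the first conjugation gives
\[
e^{-i\beta B}Z_uZ_v e^{i\beta B} = c^2 Z_uZ_v + s^2 Y_uY_v - cs\,(Z_uY_v + Y_uZ_v),
\]
reducing the problem to four expectations on the state $e^{iH}|+^n\ra$.

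For the second conjugation I would decompose $H = J_{u,v}Z_uZ_v + Z_u A + Z_v B + C$, where $A = \sum_{p\neq u,v}J_{u,p}Z_p$, $B = \sum_{p\neq u,v}J_{v,p}Z_p$, and $C$ is supported off $\{u,v\}$. Since the four summands commute, $e^{iH}$ factors accordingly and for each Pauli it suffices to record whether it commutes or anticommutes with each factor. The $Z_uZ_v$ piece commutes with all of $H$ and vanishes because $\la+|Z|+\ra = 0$. The $Y_uY_v$ piece commutes with $J_{u,v}Z_uZ_v$ and $C$ but anticommutes with $Z_uA$ and $Z_vB$, yielding $e^{-iH}Y_uY_v e^{iH} = e^{-2i(Z_uA + Z_vB)}Y_uY_v$; the $Z_uY_v$ and $Y_uZ_v$ pieces give analogous formulas involving only one of $A$ or $B$ together with the $J_{u,v}Z_uZ_v$ factor.

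To finish, each conjugated operator is expanded as a product over spectator qubits $p\notin\{u,v\}$ via $e^{i\theta Z_pZ_w} = \cos\theta + i\sin\theta\, Z_pZ_w$; the partial expectation on each $|+_p\ra$ kills the $Z_p$ term since $\la+|Z|+\ra = 0$, leaving only scalar cosines times a residual two-qubit operator on $\{u,v\}$. For the $Y_uY_v$ contribution this residual operator is $\prod_{p\neq u,v}\cos\bigl(2(J_{u,p}Z_u + J_{v,p}Z_v)\bigr)$; using $Z_u^2 = Z_v^2 = I$ and the sum-to-product identities $\cos\alpha\cos\beta \mp \sin\alpha\sin\beta = \cos(\alpha\pm\beta)$ diagonalizes it as $\tfrac{1}{2}(P_++P_-)I + \tfrac{1}{2}(P_+-P_-)Z_uZ_v$ with $P_\pm = \prod_p \cos(2J_{u,p}\pm 2J_{v,p})$. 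Combining with $\la+_u+_v|Y_uY_v|+_u+_v\ra = 0$ and $\la+_u+_v|Y_uY_vZ_uZ_v|+_u+_v\ra = \la+|iX|+\ra^2 = -1$ reproduces the two $s^2/2$ terms. An analogous shorter reduction of $Z_uY_v$ (and its mirror $Y_uZ_v$) to $\cos(2J_{u,v})Z_uY_v - \sin(2J_{u,v})X_v$, whose $|+_u+_v\ra$-expectation is $-\sin(2J_{u,v})$, then yields the $cs\sin(2J_{u,v})$ contribution with the two single cosine products.

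The main obstacle is bookkeeping rather than conceptual novelty: carefully tracking anticommutation parities to obtain the correct signs and factors of $2$ in the Heisenberg exponents, and recognizing that the two-qubit operator remaining after eliminating spectators is diagonal in $Z_uZ_v$, so that the projector decomposition converts the mixed products $\cos(2J_{u,p})\cos(2J_{v,p}) \mp \sin(2J_{u,p})\sin(2J_{v,p})$ into the clean form $\prod_p \cos(2J_{u,p}\mp 2J_{v,p})$ appearing in the statement. This is the natural generalization of the Wang et al.\ level-$1$ MaxCut computation, obtained by treating the $J_{u,v}$ term on equal footing with the spectator contributions.
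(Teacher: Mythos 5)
Your proposal is correct and follows essentially the same route as the paper's proof: conjugate $Z_uZ_v$ through the mixer to obtain the $c^2 ZZ + s^2 YY \pm cs(ZY+YZ)$ decomposition, split off the $J_{u,v}Z_uZ_v$ factor to produce the $\cos(2J_{u,v})/\sin(2J_{u,v})$ terms, and reduce the spectator couplings to products of cosines via $\langle +|Z|+\rangle=0$. The only (cosmetic) difference is that you eliminate the spectators in the Heisenberg picture and diagonalize the residual operator on the $Z_uZ_v=\pm 1$ eigenspaces, whereas the paper writes $e^{-iH'}|+^n\rangle$ explicitly and computes overlaps of spectator product states; the signs and final assembly in your sketch all check out.
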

Here we only consider the case $\gamma=1$ since 
$\gamma$ can be absorbed into the definition of $J$.
\begin{proof}
Given a 2-qubit observable $O$ define the mean value 
\be
\mu(O)= \la \Psi_H(\beta,1)| O_{u,v} |\Psi_H(\beta,1)\ra.
\ee
We are interested in the observable  $O=ZZ\equiv Z\otimes Z$.

We note that all terms in $H$ and $B$ that act trivially on $\{u,v\}$ do not contribute
to $\mu(O)$. Such terms can be set to zero. 
Given a 2-qubit observable $O$, define  a mean value
\be
\label{mu'(O)}
\mu'(O)= \la +^n|e^{iH'} O_{u,v}  e^{-iH'}|+^n\ra,  \quad \mbox{where} \quad 
\quad H'=\sum_{p\ne u,v} (J_{u,p} Z_u  + J_{v,p} Z_v )Z_p.
\ee
Using the identities
\begin{eqnarray}
e^{i\beta (X_u+X_v)} Z_u Z_v e^{-i\beta (X_u+X_v)} &=& c^2 Z_u Z_v
+s^2 Y_u Y_v + cs(Z_u Y_v + Y_u Z_v), \nonumber \\
e^{iJ_{u,v} Z_u Z_v} Z_u Z_v e^{-iJ_{u,v} Z_u Z_v}  &=& Z_u Z_v,\nonumber \\
e^{iJ_{u,v} Z_u Z_v} Y_u Y_v e^{-iJ_{u,v} Z_u Z_v}  &=& Y_u Y_v\nonumber \\
e^{iJ_{u,v} Z_u Z_v} Z_u Y_v e^{-iJ_{u,v} Z_u Z_v} 
&=& \cos{(2J_{u,v})} Z_u Y_v + \sin{(2J_{u,v})} X_v, \nonumber \\
e^{iJ_{u,v} Z_u Z_v} Y_u Z_v e^{-iJ_{u,v} Z_u Z_v} 
&=& \cos{(2J_{u,v})} Y_u Z_v + \sin{(2J_{u,v})} X_u,
\end{eqnarray}
and noting that $\mu'(ZZ)=0$  one easily gets
\be
\label{ZZ1}
\mu(ZZ) = s^2 \cdot \mu'(YY) + cs \cdot \cos{(2J_{u,v})}\left[ \mu'(ZY) + \mu'(YZ)\right]
+ cs \cdot \sin{(2J_{u,v})} \left[ \mu'(XI) + \mu'(IX)\right].
\ee
Using the explicit form of $H'$ one gets
\be
\label{state1}
e^{-i H'}|+^n\ra = \frac12 \sum_{a,b=0,1}   |a,b\ra_{u,v} \otimes |\Phi(a,b)\ra_{\mathsf{else}},
\ee
where $|\Phi(a,b)\ra$ is a tensor product state of $n-2$ qubits defined by 
\be
|\Phi(a,b)\ra = \bigotimes_{p\ne u,v} |J_{u,p} (-1)^a + J_{v,p}(-1)^b\ra_p
\quad \mbox{where} \quad 
|\theta\ra\equiv e^{-i\theta Z}|+\ra.
\ee
Combining Eqs.~\eqref{mu'(O)},\eqref{state1} one gets
\be
\label{mu'}
\mu'(O)=(1/4) \sum_{a,b,a',b'=0,1} \la a',b'|O|a,b\ra \cdot \la \Phi(a',b')|\Phi(a,b)\ra.
\ee
Using the tensor product form of the states $|\Phi(a,b)\ra$ 
and the identity $\la \theta'|\theta\ra = \cos(\theta-\theta')$
gives
\be
\label{inner}
\la \Phi(a',b')|\Phi(a,b)\ra = \prod_{p\ne u,v} \cos{[J_{u,p} (-1)^a   -J_{u,p} (-1)^{a'} + J_{v,p} (-1)^b
- J_{v,p} (-1)^{b'}]}.
\ee
From Eqs.~\eqref{mu'},\eqref{inner} one can easily compute the mean value $\mu'(O)$
for any 2-qubit observable.

Consider first the case $O=YY$.
Then the only terms contributing to Eq.~\eqref{mu'}
are those with $a'=a\oplus 1$ and $b'=b\oplus 1$. 
The identity $\la a\oplus 1|Y|a\ra = -i(-1)^a$ gives
\be
\mu'(YY)=-(1/4)\sum_{a,b=0,1} (-1)^{a+b} 
 \prod_{p\ne u,v} \cos{[2J_{u,p} (-1)^a   + 2J_{v,p} (-1)^b]},
\ee
that is,
\be
\label{YY}
\mu'(YY)=(1/2)\prod_{p\ne u,v} \cos{[2J_{u,p}   - 2J_{v,p} ]}
-(1/2)\prod_{p\ne u,v} \cos{[2J_{u,p}   + 2J_{v,p} ]}.
\ee
Next consider the case $O=YZ$. 
Note that the matrix elements $\la a',b'|O|a,b\ra$ have zero
real part. From Eqs.~\eqref{mu'},\eqref{inner} one infers
that $\mu'(YZ)$ has zero real part. This implies
\be
\label{YZ}
\mu'(YZ)=\mu'(ZY)=0.
\ee
Finally, consider the case $O=XI$. 
Then the only terms that contribute to Eq.~\eqref{mu'} are those
with $a'=a\oplus 1$ and $b'=b$. We get
\be
\label{XI}
\mu'(XI) = \prod_{p\ne u,v} \cos{(2J_{u,p})}.
\ee
Here we noted that the inner product Eq.~\eqref{inner} with $a'=a\oplus 1$
and $b'=b$ does not depend on $a,b$. By the same argument,
\be
\label{IX}
\mu'(IX) = \prod_{p\ne u,v} \cos{(2J_{v,p})}.
\ee
Combining  Eq.~\eqref{ZZ1} and Eqs.~\eqref{YY},\eqref{YZ},\eqref{XI},\eqref{IX} one arrives at
Eq.~\eqref{ZZmain}.
\end{proof}

For more general cost functions including interactions among three or more variables, there are two complications: First, 
unlike in the Ising case,
the variable elimination process will typically increase the degree of non-locality of interactions.
Second, mean values of Pauli operators on the QAOA state $\Psi_H(\beta,\gamma)$ lack a simple analytic formula (as
far as we know). However, one can approximately compute the mean values using the Monte Carlo method due to Van den Nest~\cite{vandennest}. A specialization of this method to simulation of the level-1 QAOA is described in~\cite{bravyietalsimulation18}. The Monte Carlo simulator has runtime scaling polynomially with the number of qubits, number of terms in the cost function, and the inverse error tolerance, see~\cite{bravyietalsimulation18} for details. This method also requires no restrictions on the depth of the variational circuit.

An important distinction between QAOA and RQAOA
lies in the measurement step. QAOA requires few-qubit measurements to estimate the variational energy  as well as the final $n$-qubit measurement
that assigns a  value to each individual variable.
This last step is what makes QAOA hard to simulate classically and may lead to a quantum advantage~\cite{farhiharrow2016}. In contrast, RQAOA only needs few-qubit measurements to estimate mean values
of individual terms in the cost function. The $n$-qubit measurement step is replaced by the correlation rounding that eliminates variables one by one. One may ask whether the lack of multi-qubit measurements also precludes a quantum advantage. Indeed, in the special case of level-1 variational circuits and the Ising-like cost function
RQAOA can be efficiently simulated classically, see above.
However, level-$p$ RQAOA with $p>1$ as well as level-1 RQAOA with more general cost functions are not known to be classically simulable in polynomial time,  leaving room for a quantum advantage.

\section{Comparison of QAOA, RQAOA, and Classical Algorithms}

\subsection{QAOA versus Classical Local Algorithms} \label{sec:qaoalocalclassical}
In this section, we discuss another limitation of QAOA which results from its locality and the covariance condition discussed in Lemma~\ref{lem:covarianceqaoa}: we compare QAOA  to a certain very simple classical local algorithm (see Lemma~\ref{lem:classicallocal} below). We show that there is an exponential number of problem instances for which the classical local algorithm outperforms QAOA. 

Let us briefly sketch the notion of a local classical algorithm. We envision that the tuple $(J_e)_{e\in E}$ is given as input. 
Here we are interested in  algorithms which are local with respect to the underlying graph~$G$. For $r\in\mathbb{N}$ and $v\in V$, define
\begin{align}
E_r(v)=\bigcup_{\ell=1}^r \bigcup_{\substack{(e_1,\ldots,e_\ell)\\
\textrm{ path with $v\in e_1$}}} \{e_1,\ldots,e_\ell\}
\end{align}
to be the set of edges that belong to a path starting at~$v$ of length bounded by~$r$. Consider a classical  algorithm~$\cA$ which on input $\{J_e\}_{e\in E}$ outputs $x=(x_1,\ldots,x_n)\in \{0,1\}^n$. We say that~$\cA$ is $r$-local if 
there is a family of functions $\{g_v:\mathbb{R}^{E_r(v)}\rightarrow\{0,1\}\}_{v\in V}$ such that the following holds for every problem instance $(J_e)_{e\in E}\in \mathbb{R}^{E}$: We have 
\begin{align}
x_v&=g_v\left(\{J_e\}_{e\in E_r(v)}\right)\qquad\textrm{ for every }v\in V\ .\label{eq:localfunctionsclassical}
\end{align} 
In other words, in an $r$-local classical algorithm, every  output bit~$x_v$ only depends on edge weights~$J_e$ belonging to paths of length bounded by~$r$ starting at~$v$. 
We note that this definition can easily be generalized to the probabilistic case (e.g., by including local random bits). For the purposes of this section, deterministic functions turn out to be sufficient.

The (choice of) family~$\{g_v\}_{v\in V}$ can be considered as a set of variational parameters for the classical algorithm. To keep the number of variational parameters constant, we consider vertex-transitive graphs~$G$. Fix $v_*\in V$. For every $v\in V$, fix an automorphism~$\pi_v$ of~$G$ such that $\pi_v(v_*)=v$. Then the sets $E_r(v)$ for different $v\in V$ can be identified via $E_r(v)=\pi_v(E_r(v^*))$. We say that an $r$-local classical algorithm is {\em uniform} if (after this identification) $g_v\equiv g$ for all $v\in V$, i.e., if there is a single function $g:\mathbb{R}^{E_r(v_*)}\rightarrow \{0,1\}$ specifying the behavior of the algorithm.  To obtain general-purpose algorithms (applicable to any instance), the function $g:\mathbb{R}^{E_r(v_*)}\rightarrow \{0,1\}$ should be chosen adapatively (i.e., potentially depending on the instance). The definition of local classical algorithm sketched here includes e.g., the algorithms considered in Ref.~\cite{hastings2019}, though it is slightly more general as the local functions can be arbitrary.

Let $n=6r$ be a multiple of~$6$. Consider $n$-qubit Hamiltonians (cf.~\eqref{eq:hgj}) of the form 
\be
H(J)= \sum_{k\in \ZZ_n} J_kZ_k Z_{k+1}\qquad\textrm{ where }\qquad J=(J_0,\ldots,J_{n-1})\in \{1, -1\}^n\ .
\ee
To define locality and uniformity for the  cycle graph~$\mathbb{Z}_n$, let  $\pi_v(w)=v+w\pmod n$ be chosen as translation modulo~$n$
for $v\in\mathbb{Z}_n$. We show the following: 
\begin{lemma} \label{lem:classicallocal}
There is a subset~$\cS\subset \{1, -1\}^{n}$ of $2^{n/3}$ problem instances such that  the following holds:
\begin{enumerate}[(i)]
\item\label{it:qaoanogo}
$\qaoap{H(J)}{p}\leq p/(p+1)$ for every $p\in\mathbb{N}$ and every $J\in\cS$.
\item\label{it:classicalalgorithmqaoa}
There is a $1$-local uniform classical algorithm such that for every $J\in \cS$,
the algorithm outputs $x\in \{0,1\}^n$ such that $\bra{x}H(J)\ket{x}=1$.

\item\label{it:rqaoa} Level-$1$ $\RQAOA$ achieves the approximation ratio~$1$.
\end{enumerate}
\end{lemma}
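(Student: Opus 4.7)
The plan is to construct $\mathcal{S}$ as a family of satisfiable 1D Ising instances with a fixed three-periodic skeleton, then dispatch (i)--(iii) separately, with (ii) and (iii) following essentially from the covariance identity in Lemma~\ref{lem:covarianceqaoa}. For $n=6r$, I take
\begin{equation}
\mathcal{S} = \bigl\{J\in\{1,-1\}^{\mathbb{Z}_n} : J_{3k}=1\ \text{and}\ J_{3k+1}=J_{3k+2}=\delta_k\ \text{for some}\ \delta\in\{1,-1\}^{2r}\bigr\}.
\end{equation}
Then $|\mathcal{S}|=2^{2r}=2^{n/3}$, and $\prod_k J_k = \prod_k \delta_k^2 = 1$, so each $J\in\mathcal{S}$ is perfectly satisfiable with $E_{\max}=n$.

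For (i), the claim is that the fixed, instance-independent $1$-local uniform function $g:\{1,-1\}^2\to\{0,1\}$ defined by $g(1,1)=g(1,-1)=g(-1,1)=0$ and $g(-1,-1)=1$ produces an optimal assignment $x_v=g(J_{v-1},J_v)$ for every $J\in\mathcal{S}$. By inspecting the three vertex types inside a block, $v=3k$ has neighborhood $(\delta_{k-1},1)$, $v=3k+1$ has $(1,\delta_k)$, and $v=3k+2$ has $(\delta_k,\delta_k)$; hence $x_{3k}=x_{3k+1}=0$ for all $k$ while $x_{3k+2}=(1-\delta_k)/2$. A direct check verifies that all three in-block edges satisfy $(-1)^{x_k+x_{k+1}}=J_k$ in both cases $\delta_k=\pm 1$, so the output attains energy $n=E_{\max}$.

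For (ii), Lemma~\ref{lem:covarianceqaoa} implies that $\qaoap{H(J)}{p}$ depends on $J$ only through its class modulo edge-sign flips along sets of the form $\partial\mathcal{L}$. On the cycle every even-cardinality subset of edges is of this form, so every $J$ with $\prod_k J_k=+1$ is QAOA-equivalent to $J\equiv+1$ and, via the even sublattice as $\mathcal{L}$, also to $J\equiv-1$, the ring of disagrees. Combining the Mbeng--Fazio--Santoro upper bound $\qaoap{H^{\mathrm{MaxCut}}_{\mathbb{Z}_n}}{p}\leq(2p+1)/(2p+2)$ from Ref.~\cite{mbengfaziosantoro2019} with the affine relation between $H^{\mathrm{MaxCut}}_{\mathbb{Z}_n}$ and $H(J\equiv-1)$ translates to $\qaoap{H(J)}{p}\leq p/(p+1)$ for the unshifted Hamiltonian, proving (ii).

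For (iii), the same covariance identity yields $\langle Z_iZ_j\rangle_J = J_{i,j}\,\langle Z_iZ_j\rangle_{J\equiv 1}$, while Lemma~\ref{lemma:QAOAlevel1} evaluates the nearest-neighbor cycle correlation on $J\equiv 1$ to $(1/2)\sin(4\beta)\sin(4\gamma)$, which is strictly positive at the QAOA optimum. Hence $\mathrm{sgn}(M_{i,j})=J_{i,j}$ on every cycle edge, and the RQAOA constraint $Z_j=\mathrm{sgn}(M_{i,j})Z_i$ coincides with the relation $z_j=J_{i,j}z_i$ satisfied by the optimal assignment. Each elimination step produces a shorter cycle whose couplings remain in $\{\pm 1\}$ and still multiply to $+1$, so the argument iterates down to the base case where brute force returns the optimum. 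The main obstacle is the construction of $\mathcal{S}$ itself, so that all three properties hold simultaneously; given the three-periodic skeleton above, (ii) and (iii) reduce almost mechanically to the covariance lemma and the ring-of-disagrees bound.
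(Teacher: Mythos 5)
Your proposal is correct and follows essentially the same route as the paper: a three-periodic family of gauge transforms of the ring of disagrees (yours is a cyclic relabeling of the paper's $J_{3a}=J_{3a+1}=(-1)^{s_a}$, $J_{3a+2}=1$), the covariance lemma combined with the Mbeng--Fazio--Santoro bound for the QAOA upper bound, the identical $1$-local rule $g(-1,-1)=1$, and the same sign-matching plus parity-preservation analysis of level-$1$ RQAOA. The only cosmetic differences are that your items (i) and (ii) are swapped relative to the lemma's numbering, and that you justify the gauge equivalence via the cut space of the cycle (every even edge set is a boundary $\partial\mathcal{L}$) rather than exhibiting the explicit operator $\overline{X}(s)$ as the paper does.
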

\begin{proof}
For every $s=(s_0,\ldots,s_{2r-1})\in \{0,1\}^{2r}$ define $J=J(s)\in \{1, -1\}^n$ by
\begin{align}
J_{3a}=J_{3a+1}=(-1)^{s_a},\qquad\text{and}\qquad J_{3a+2}=1,
\end{align}
for all $a=0,1,\ldots,2r-1$.  We claim that the set $\cS=\{J(s)\ |\ s\in \{0,1\}^{2r}\}$ has the required properties. Consider an instance $H(J(s))$ with $s\in \cS$. Define 
\begin{align}
\overline{X}(s)&=\prod_{a=0}^{2r-1} X_{3a+1}\ .
\end{align}
Then $H(J(s))$ is related to $\HZn=\sum_{j\in \ZZ_n} Z_j Z_{j+1}$ by the gauge transformation
\begin{align}
H(J(s))&=
\overline{X}(s) \HZn
\overline{X}(s)^{-1}\ .
\end{align}
Since the QAOA algorithm is invariant under such gauge transformation (see Lemma~\ref{lem:covarianceqaoa}),
we obtain
\begin{align}
\qaoap{H(J(s))}{p}=\qaoap{\HZn}{p}\leq \frac{p}{p+1}\ 
\end{align}
where we use the bound
\begin{align}
\qaoap{\HZn^{\mathrm{MaxCut}}}{p} \leq \frac{2p+1}{2p+2}, \label{eq:provenupperbound}
\end{align}
proven in~\cite{mbengfaziosantoro2019} for even $n$, in combination with Lemma~\ref{lem:covarianceqaoa}. This shows~\eqref{it:qaoanogo}.

For the proof of~\eqref{it:classicalalgorithmqaoa}, consider the classical algorithm~$\cA$ which on input $J=(J_0,\ldots,J_{n-1})$ outputs 
\begin{align}
x_v&=g(J_{v-1},J_v)\qquad\textrm{ for every } v\in\mathbb{Z}_n\ ,
\end{align}
where
\begin{align}
g(J,J')&=\begin{cases}
1 \qquad &\textrm{ if }(J,J')=(-1,-1)\\
0&\textrm{ otherwise}\ .
\end{cases}
\end{align}
Clearly, the algorithm~$\cA$ is uniform and $1$-local, and it is easy to check that  the output satisfies~$\bra{x}H(J)\ket{x}=1$.

The proof of~\eqref{it:rqaoa} is given as a part of~Lemma~\ref{lem:rqaoa}.
\end{proof}

\subsection{\texorpdfstring{$\RQAOA$}{RQAOA} on the Ising ring\label{sec:rqaoaring}}
Here we prove that level-$1$ $\RQAOA$ achieves approximation ratio~$1$ on the ring of disagrees, in sharp contrast to (arbitrary) level-$p$ QAOA (see Lemma~\ref{lem:classicallocal}\eqref{it:qaoanogo}). More generally, level-$1$ $\RQAOA$ produces~$x\in \{0,1\}^n$ which maximizes the cost function 
(that is, achieves approximation ratio~$1$) for any $1D$~Ising model where the coupling coefficients are either~$+1$ or $-1$. 
\begin{lemma}\label{lem:rqaoa}
Consider a cost function of the form 
\begin{align}
C_n(x) &=\sum_{k\in \mathbb{Z}_n} J_k x_kx_{k+1}\qquad\textrm{ for }x\in \{1,-1\}^n\ ,
\end{align}
where $J_k\in \{1,-1\}$ for all $k\in\mathbb{Z}_n$.
Then the level-$1$ $\RQAOA$  produces $x^*\in \{1,-1\}^n$ such that $C_n(x^*)=\max_{x\in \{1,-1\}^n} C_n(x)$.
\end{lemma}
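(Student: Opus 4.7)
The plan is to prove the statement by induction on~$n$, maintaining as inductive invariant that the current cost function is a $\pm 1$-coupled Ising model on a cycle; the key observation will be that at each step the level-$1$ QAOA correlators $M_{k,k+1}$ have magnitudes independent of~$k$ and signs equal to~$J_k$, so every tie-breaking choice made by $\RQAOA$ produces a variable-elimination step that is compatible with some global maximum of~$C_n$.

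To execute the induction step I would first apply Lemma~\ref{lemma:QAOAlevel1} to $H=\sum_k J_k Z_k Z_{k+1}$ to compute~$M_{k,k+1}$. On a cycle each vertex has exactly two neighbors, so for adjacent qubits $u=k$, $v=k+1$ and any $p\notin\{u,v\}$ at most one of $J_{u,p}, J_{v,p}$ is nonzero; together with $\cos(-\theta)=\cos\theta$, this forces the ``sum'' and ``difference'' products in the first line of Eq.~\eqref{ZZmain} to coincide and cancel. Restoring the~$\gamma$-dependence (absorbing $\gamma$ into $J$) and using $J_k\in\{\pm 1\}$ so that $\sin(2\gamma J_k)=J_k\sin(2\gamma)$ and $\cos(2\gamma J_{k\pm 1})=\cos(2\gamma)$, the remaining term collapses to
\begin{equation*}
M_{k,k+1}=J_k\cdot\tfrac{1}{2}\sin(4\beta)\sin(4\gamma).
\end{equation*}
The QAOA energy is therefore $(n/2)\sin(4\beta)\sin(4\gamma)$, whose maximum over $(\beta,\gamma)$ equals $n/2>0$; at that optimum all $|M_{k,k+1}|$ are equal and $\mathrm{sign}(M_{k,k+1})=J_k$ for every edge, so $\RQAOA$ imposes a constraint of the form $x_{k_*+1}=J_{k_*}x_{k_*}$ for whatever edge~$k_*$ the tie-breaking rule selects.

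Next I would verify that this constraint is consistent with some maximizer of~$C_n$. Let $P=\prod_k J_k\in\{+1,-1\}$. If $P=+1$ then $\max_x C_n(x)=n$ and every maximizer satisfies $x_k x_{k+1}=J_k$ for all~$k$; if $P=-1$ then $\max_x C_n(x)=n-2$ and each maximizer violates exactly one edge, but by fixing $x_1=1$ and propagating $x_{k+1}=J_k x_k$ around the cycle starting from any prescribed edge, we may place the lone violation wherever we wish, in particular away from~$k_*$. Substituting $x_{k_*+1}=J_{k_*}x_{k_*}$ into~$C_n$ then contributes the constant $+1$ and merges $J_{k_*},J_{k_*+1}$ into a single coupling $J_{k_*}J_{k_*+1}\in\{\pm 1\}$ on the new edge $(k_*,k_*+2)$; the other couplings are untouched, the parity~$P$ is preserved, and the graph remains a $\pm 1$-coupled cycle on $n-1$ vertices, so the inductive invariant persists.

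The induction then closes: the hypothesis applied to the reduced cost $C_{n-1}$ returns an output $\tilde{x}^*$ with $C_{n-1}(\tilde{x}^*)=\max_x C_{n-1}(x)$, and the lifted assignment $\xi(\tilde{x}^*)$ (which sets $x_{k_*+1}=J_{k_*}\tilde{x}^*_{k_*}$) satisfies $C_n(\xi(\tilde{x}^*))=1+\max_x C_{n-1}(x)=\max_x C_n(x)$; the base case is provided by the classical brute-force step once the size has dropped to~$n_c$. I expect the main obstacle to be the bookkeeping when~$n$ shrinks below~$4$: reducing a triangle creates parallel edges between the last two remaining qubits, so the effective coupling becomes $J_{k_*}J_{k_*+1}+J_{\text{other}}\in\{-2,0,+2\}$ and the ``$\pm 1$-coupled cycle'' invariant fails. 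I would sidestep this either by choosing $n_c\ge 4$ so that brute force handles the tail before any degeneracy appears, or by broadening the inductive invariant to accommodate parallel edges with integer $\pm 1$ coefficients (with the correlator computation re-done in the enlarged class).
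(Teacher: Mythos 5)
Your proposal is correct and follows essentially the same route as the paper's proof: you invoke the explicit level-$1$ correlator formula to get $M_{k,k+1}=\tfrac12 J_k\sin(4\beta)\sin(4\gamma)$, conclude that at the optimal angles every selected constraint is $x_{k_*+1}=J_{k_*}x_{k_*}$, and observe that elimination maps a $\pm 1$-coupled cycle to a $\pm 1$-coupled cycle with the coupling parity $\prod_k J_k$ preserved. The only (harmless) differences are presentational: the paper additionally bounds the next-nearest-neighbor correlator by $1/4$ and concludes by evaluating the final energy via the conserved parity, whereas you argue step-by-step that each constraint is compatible with a global maximizer; your explicit handling of the small-$n$ degeneracy via the cutoff $n_c$ is a point the paper leaves implicit.
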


\noindent It would be interesting to see additional, more general families of examples where approximation ratios achieved by~$\RQAOA$ can be computed or lower bounded analytically.
\begin{proof}
Let 
\begin{align}
H&=\sum_{k\in\mathbb{Z}_n}J_kZ_{k}Z_{k+1}\ .\label{eq:Hznwithcpl}
\end{align}
Observe first that $\bra{\Psi_{H}(\beta,\gamma)}Z_iZ_j\ket{\Psi_{H}(\beta,\gamma)}=0$ if $|i-j|>2$ since in this case the operators $U^{-1}Z_iU$ and $U^{-1}Z_jU$ have disjoint support. 
Lemma~\ref{lemma:QAOAlevel1} shows that a $\QAOA_1$-state $\Psi_{H}(\beta,\gamma)$ has expectation values
\begin{align}
\bra{\Psi_{H}(\beta,\gamma)}Z_iZ_j\ket{\Psi_{H}(\beta,\gamma)}&=
\begin{cases}\frac{1}{2}J_{i}\sin(4\beta)\sin(4\gamma)
\qquad&\textrm{ if }j=i+1\\
\frac{1}{4}J_iJ_{i+1}\sin^2(2\beta)\sin^2(4\gamma)&\textrm{ if }j=i+2\\
0 &\textrm{ otherwise}
\end{cases}\label{eq:distijx}
\end{align}
when  $J_{k}\in \{1, -1\}$ for every $k\in \mathbb{Z}_n$. 
Thus  
\begin{align}
|\bra{\Psi(\beta,\gamma)}Z_{i}Z_{i+2}\ket{\Psi(\beta,\gamma)}|\leq 1/4\qquad\textrm{ for all  }(\beta,\gamma)\ .\label{eq:upperboundziziplustwo}
\end{align}
Assume  $(\beta^*,\gamma^*)$ are such that  $(\beta^*,\gamma^*)=\argmax_{(\beta,\gamma)}
\bra{\Psi_{H}(\beta^*,\gamma^*)}H\ket{\Psi_{H}(\beta^*,\gamma^*)}$. Then we can infer from~\eqref{eq:distijx} 
that
\begin{align}
\bra{\Psi(\beta^*,\gamma^*)}Z_{i}Z_{i+1}\ket{\Psi(\beta^*,\gamma^*)}=J_i/2\ .\label{eq:ziziplusone}
\end{align}
 Combined with~\eqref{eq:upperboundziziplustwo} and~\eqref{eq:distijx} we conclude that 
\begin{align}
\argmax_{(i,j):i<j} | \bra{\Psi_{H}(\beta^* ,\gamma^*)}Z_i Z_{j}\ket{\Psi_{H}(\beta^* ,\gamma^*)}|=(i^*,i^*+1)\label{eq:variabletoeliminate}
\end{align}
for some $i^*\in \mathbb{Z}_n$. Without loss of generality, assume that $i^*=n-2$. Then, according to~\eqref{eq:variabletoeliminate}, the $\RQAOA$ algorithm eliminates the variable~$x_{n-1}$ (i.e., $v=n-1$, $f=\{n-2,n-1\}$). By~\eqref{eq:ziziplusone}, this is achieved by imposing the constraint  
\begin{align}
x_{n-1}=x_{n-2}J_{n-2} \label{eq:znminusconstraint}
\end{align} i.e., $\sigma=J_{n-2}$. The resulting reduced graph $G'=(V',E')$ has 
vertex set $V'=V\backslash \{n-1\}=\mathbb{Z}_{n-1}$ and edges
\begin{align}
E'&=\left\{\{i,i+1\}\ |\ i\in\mathbb{Z}_n\backslash \{n-2\}\right\}\cup \{\{n-2,0\}\}\\
&=\left\{\{i,i+1\}\ |\ i\in\mathbb{Z}_{n-2}\right\}\ ,
\end{align}
and it is easy to check that the new cost function takes the form
\begin{align}
C'(x)&=1+\sum_{k\in \mathbb{Z}_{n-1}}J_k'x_kx_{k+1}\label{eq:cprimeonedef}
\end{align}
with
\begin{align}
J'_{i}&=\begin{cases}
J_{i}\qquad&\textrm{ when } i\neq n-2\\
J_{n-2}J_{n-1}&\textrm{ when }i=n-2\ 
\end{cases} \label{eq:transformedHx}
\end{align}
We note that the transformation~\eqref{eq:transformedHx} preserves the  parity of the couplings in the sense that
\begin{align}
\prod_{k\in \mathbb{Z}_n} J_k&=\prod_{k\in\mathbb{Z}_{n-1}}J_k'\ .\label{eq:paritypreservation}
\end{align}

Inductively, the $\RQAOA$ thus eliminates variables~$x_{n-1},x_{n-2},\ldots,x_{n_c}$ while imposing the constraints (cf.~\eqref{eq:znminusconstraint})
\begin{align}
x_{n-1}&=x_{n-2}J_{n-2}\\
x_{n-2}&=x_{n-3}J'_{n-3}\\
&\ \ \vdots
\end{align} 
arriving at the cost function~$C_{n_c}(x)$ associated with an Ising chain of length~$n_c$ having couplings belonging to~$\{1,-1\}$.
Because of~\eqref{eq:paritypreservation} and because~\eqref{eq:Hznwithcpl} is frustrated if and only if~$
\prod_{k\in \mathbb{Z}_n} J_k=-1$, we conclude that  
any maximum $x^*\in \{1,-1\}^{n_c}$ of $C_{n_c}(x)$ satisfies
\begin{align}
C_{n_c}(x^*)&=\begin{cases}
n_c+1\qquad& \textrm{ if }\prod_{k\in \mathbb{Z}_n} J_k=1\\
n_c-2\qquad &\textrm{ otherwise}\ .
\end{cases}
\end{align}
Because the cost function acquires a constant energy shift in every variable elimination step (cf.~\eqref{eq:cprimeonedef}), the output $x=\xi(x^*)$ of the $\RQAOA$ algorithm satisfies
\begin{align}
C(x)&=n-n_c+C_{n_c}(x^*)=\begin{cases}
n+1\qquad& \textrm{ if }\prod_{k\in \mathbb{Z}_n} J_k=1\\
n-2\qquad &\textrm{ otherwise}\ .
\end{cases}
\end{align}
This implies the claim.
\end{proof}

%\bibliographystyle{h-physrev}
%\bibliographystyle{unsrt}
%\bibliography{q}

\begin{thebibliography}{10}

\bibitem{chiuetal16}
C.-K. Chiu, J.~C.~Y. Teo, A.~P. Schnyder, and S.~Ryu,
\newblock Rev. Mod. Phys. {\bf 88}, 035005 (2016).

\bibitem{xiechengwen}
X.~Chen, Z.-C. Gu, and X.-G. Wen,
\newblock Phys. Rev. B {\bf 82}, 155138 (2010).

\bibitem{kitaev2003fault}
A.~Y. Kitaev,
\newblock Annals of Physics {\bf 303}, 2 (2003).

\bibitem{BravyiHastingsVerstraete06}
S.~Bravyi, M.~B. Hastings, and F.~Verstraete,
\newblock Phys. Rev. Lett. {\bf 97}, 050401 (2006).

\bibitem{aharonovtouati18}
D.~{Aharonov} and Y.~{Touati},
\newblock preprint  (2018), arXiv:1810.03912.

\bibitem{freedmanhastings13}
M.~H. Freedman and M.~B. Hastings,
\newblock Quantum Information and Computation {\bf 14} (2014).

\bibitem{hastingsqpcp}
M.~B. Hastings,
\newblock Quantum Information and Computation {\bf 13}, 393 (2013).

\bibitem{eldarharrow}
L.~{Eldar} and A.~W. {Harrow},
\newblock Local {H}amiltonians whose ground states are hard to approximate,
\newblock in {\em 2017 IEEE 58th Annual Symposium on Foundations of Computer
  Science (FOCS)}, pp. 427--438, 2017.

\bibitem{aharonovaradvidick2013}
D.~Aharonov, I.~Arad, and T.~Vidick,
\newblock SIGACT News {\bf 44}, 47 (2013).

\bibitem{nirkhevaziraniyuen2018}
C.~Nirkhe, U.~Vazirani, and H.~Yuen,
\newblock {Approximate Low-Weight Check Codes and Circuit Lower Bounds for
  Noisy Ground States},
\newblock in {\em Proceedings of ICALP 2018}, edited by I.~C. et~al., , Leibniz
  International Proceedings in Informatics (LIPIcs) Vol. 107, pp. 91:1--91:11,
  Dagstuhl, Germany, 2018, Dagstuhl--Leibniz-Zentrum fuer Informatik.

\bibitem{GuWenSPT09}
Z.-C. Gu and X.-G. Wen,
\newblock Phys. Rev. B {\bf 80}, 155131 (2009).

\bibitem{haldane83}
F.~D.~M. Haldane,
\newblock Phys. Rev. Lett. {\bf 50}, 1153 (1983).

\bibitem{aklt87}
I.~Affleck, T.~Kennedy, E.~H. Lieb, and H.~Tasaki,
\newblock Phys. Rev. Lett. {\bf 59}, 799 (1987).

\bibitem{schuch11}
N.~Schuch, D.~P\'erez-Garc\'{\i}a, and I.~Cirac,
\newblock Phys. Rev. B {\bf 84}, 165139 (2011).

\bibitem{pollmannetal10}
F.~Pollmann, A.~M. Turner, E.~Berg, and M.~Oshikawa,
\newblock Phys. Rev. B {\bf 81}, 064439 (2010).

\bibitem{weiafflekraussendorf12}
T.-C. Wei, I.~Affleck, and R.~Raussendorf,
\newblock Phys. Rev. A {\bf 86}, 032328 (2012).

\bibitem{millermiyake15}
J.~Miller and A.~Miyake,
\newblock Phys. Rev. Lett. {\bf 114}, 120506 (2015).

\bibitem{MORGENSTERN199444}
M.~Morgenstern,
\newblock Journal of Combinatorial Theory, Series B {\bf 62}, 44  (1994).

\bibitem{farhi2014}
E.~{Farhi}, J.~{Goldstone}, and S.~{Gutmann},
\newblock preprint  (2014), arXiv:1411.4028.

\bibitem{hastings2019}
M.~B. {Hastings},
\newblock preprint  (2019), arXiv:1905.07047.

\bibitem{goemanswilliamson1995}
M.~X. Goemans and D.~P. Williamson,
\newblock J. ACM {\bf 42}, 1115 (1995).

\bibitem{mbengfaziosantoro2019}
G.~Mbeng, R.~{Fazio}, and G.~{Santoro},
\newblock preprint  (2019), arXiv:1906.08948.

\bibitem{vizing1964}
V.~G. Vizing,
\newblock Diskret. Analiz {\bf 3}, 25  (1964).

\bibitem{marcus2015a}
A.~W. {Marcus}, D.~A. {Spielman}, and N.~{Srivastava},
\newblock Interlacing families iv: Bipartite {R}amanujan graphs of all sizes,
\newblock in {\em 2015 IEEE 56th Annual Symposium on Foundations of Computer
  Science}, pp. 1358--1377, 2015.

\bibitem{marcus2015b}
A.~W. Marcus, D.~A. Spielman, and N.~Srivastava,
\newblock Annals of Mathematics {\bf 182}, 307 (2015).

\bibitem{zhihui2018}
Z.~Wang, S.~Hadfield, Z.~Jiang, and E.~G. Rieffel,
\newblock Phys. Rev. A {\bf 97}, 022304 (2018).

\bibitem{vandennest}
M.~Van Den~Nest,
\newblock Quantum Info. Comput. {\bf 11}, 784 (2011).

\bibitem{bravyietalsimulation18}
S.~{Bravyi} {\em et~al.},
\newblock {Quantum} {\bf 3}, 181 (2019).

\bibitem{farhiharrow2016}
E.~{Farhi} and A.~W. {Harrow},
\newblock preprint  (2016), arXiv:1602.07674.

\end{thebibliography}

\end{document}